\documentclass{article}

\usepackage[preprint]{neurips_2026}

\usepackage[utf8]{inputenc}
\usepackage[T1]{fontenc}
\usepackage{hyperref}
\usepackage{url}
\usepackage{booktabs}
\usepackage{amsfonts}
\usepackage{nicefrac}
\usepackage{microtype}
\usepackage{xcolor}

\usepackage{graphicx}
\usepackage{amsthm}
\usepackage{amsmath}
\usepackage{amssymb}
\usepackage{csquotes}
\usepackage{bbm}
\usepackage{dsfont}
\usepackage[boxruled,linesnumbered]{algorithm2e}
\usepackage{algpseudocode}
\usepackage[capitalise]{cleveref}

\usepackage[dvipsnames]{xcolor}

\newtheorem{theorem}{Theorem}
\newtheorem{corollary}[theorem]{Corollary}
\newtheorem{lemma}[]{Lemma}[section]

\newtheorem{definition}[]{Definition}
\newtheorem{remark}[]{Remark}

\newtheorem{innercustomthm}{Theorem}

\newenvironment{customthm}[1]
  {\renewcommand\theinnercustomthm{#1}\innercustomthm}
  {\endinnercustomthm}
\newtheorem{innercustomlemma}{Lemma}
\newenvironment{customlemma}[1]
  {\renewcommand\theinnercustomlemma{#1}\innercustomlemma}
  {\endinnercustomlemma}
  \newtheorem{innercustomcorollary}{Corollary}
\newenvironment{customcorollary}[1]
  {\renewcommand\theinnercustomcorollary{#1}\innercustomcorollary}
  {\endinnercustomcorollary}

\title{Domination-Avoiding Learning Agents Cannot Collude}

\author{
    Noam Nisan \\
    Hebrew University of Jerusalem \\
    \texttt{noam@cs.huji.ac.il}
    \And
    Emmanuel Zerah \\
    Hebrew University of Jerusalem \\
    \texttt{emmanuel.zerah@mail.huji.ac.il} 
}

\begin{document}

\maketitle

\begin{abstract}
    An influential paper of Calvano et al. empirically demonstrated that Q-learning agents spontaneously collude when placed as sellers that compete on prices in a natural market model.  More recent results of Fish et al. empirically demonstrated that similar collusion happens with commercial LLMs.  
    We formally prove that such collusion can also happen with external-regret-minimizing agents.  We identify a very general class of agents, which we term Domination-Avoiding agents, that provably do not collude in such markets. This class
    contains all Mean-Based agents and all internal-regret-minimizing agents, as well as others such as Multiplicative-Weight agents with variable learning rate and contextual variants thereof.
    More generally we show that, in any game, this class of agents is
    guaranteed to jointly learn to almost never play strategies that are eliminated by repeated elimination of purely dominated strategies.
\end{abstract}

\section{Introduction}

In recent years we have seen more and more economic activity that is done by computerized
agents rather than directly by humans.  Prominent examples include auto-bidding in 
online ad auctions \cite{A+24} and high-speed trading in financial markets \cite{CJP15} but other examples abound, e.g. \cite{ACEX24,CMW16} and many more.  These algorithms are
designed to optimize outcomes for their ``owner'', learning and adapting to market conditions.  In some sense these learning agents are ``hyper-rational'', 
always following their internal algorithms, but they often deviate from standard
notions of either human or mathematical ``rationality'' due to various constraints such as learning limitations or
computational limitations.

When multiple such learning agents interact, the resulting emergent behavior often becomes
hard to analyze or predict.  Some authors have even used the term ``chaos'' to describe 
either the unpredictability \cite{AGPS26} of the dynamics of multiple agents or the various socially undesirable outcomes that it may have \cite{S+26}.  

It turns out that even rather simple learning agents operating in
rather simple economic settings exhibit somewhat ``mysterious'' behavior.  An influential
study \cite{CCDP20} looked at an economic setting 
of price competition in a market, specifically a Bertrand duopoly with ``logit demand''.
In this setting two producers need to each set a price for selling a product, where 
a lower price will result in getting a larger fraction of demand (rather than {\em all} the demand as in the classical Bertrand model).  If each producer 
optimizes for its own utility then the prices reach a well understood duopoly-price
Nash equilibrium.  If they collude then they can agree on the higher monopoly price and increase their revenue.  

In \cite{CCDP20}, they ran simulations of dynamics where 
the prices set by each producer were decided by standard classical ``Q-learning'' algorithms \cite{WD92, SB08} that are designed to optimize the agent's welfare
in an unknown dynamic environment.  The surprising empirical finding was that these Q-learning algorithms usually reached collusive supra-competitive prices, typically capturing roughly 70\% of the profit gap between the competitive and monopoly benchmarks.
While it is well known that long-term cooperation--in our case, 
collusion--can be maintained in repeated games (\cite{FT91, OR94}), this outcome
is quite surprising since these standard learning 
algorithms are designed to unilaterally optimize their own utility, do not communicate
with their ``competitor'' or try to influence their behavior using any (explicit) reward, 
punishment, or threat mechanisms.  More recent work has shown that
such collusion can also spontaneously develop among standard commercial LLM-based learning agents \cite{FGK24}.

So how do these agents learn to collude?  Will other learning agents collude as well?  How can we design learning agents 
that do not collude?
Can we design legal protections against
such algorithmic collusion \cite{HLZ24}?  On the flip side, even though in our context collusion -- cooperation between the agents -- is considered bad, in other contexts cooperation could be desirable and perhaps we could explicitly design {\em for} such cooperation.

Several recent theoretical papers have looked at these types of questions for theoretically clean classes of learning agents, specifically for various types of ``regret-minimizing'' agents (see textbook \cite{CL06}) in various economic contexts; see recent survey \cite{H26}. 

\begin{figure}[t]
    \centering
    \includegraphics[width=\textwidth]{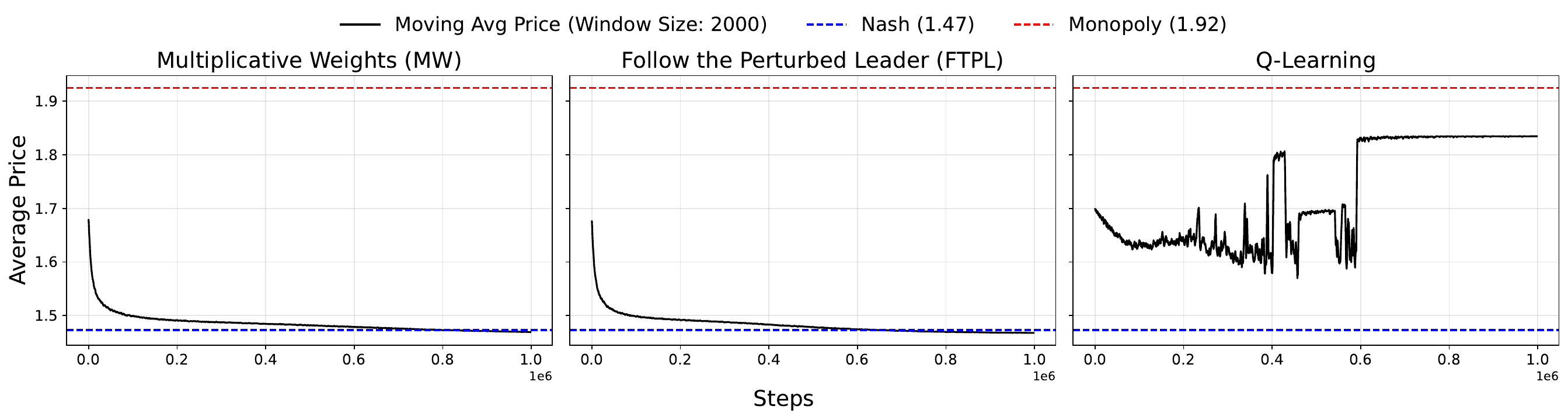}
    \caption{{Average pricing trajectories of learning agents in the Bertrand Logit duopoly game (typical run). Unlike Q-Learning (\textbf{Right}), which successfully coordinates on collusive prices, the no-regret algorithms MW (\textbf{Left}) and FTPL (\textbf{Center}) converge to the competitive Nash equilibrium.}}
    \label{fig:learning_dynamics}
\end{figure}

We started our investigation by running several ``classical''  external-regret-minimizing learning agents such as ``Multiplicative Weights'' (MW) and ``Follow the Perturbed Leader'' (FTPL) on the same Bertrand duopoly games with Logit demand considered by \cite{CCDP20,FGK24}. We observed that these algorithms did converge to the competitive equilibrium duopoly prices and did not ``collude'' -- in contrast to what was observed with Q-learning algorithms or LLMs {(see \cref{fig:learning_dynamics})}.  So the first natural question is {\em why?} Is this lack of collusion due to the external-regret minimization property itself or rather due to special properties of the specific algorithms that we tried? Our first result shows  external-regret-minimization by itself does not ensure
lack of collusion in this game. A similar possibility of collusion by external-regret minimizing algorithms was previously shown for classical Bertrand games (with low-price wins all market) \cite{NP10} as well as for auctions \cite{KN22, FLN16}, but we show it also for the technically more challenging Bertrand duopoly game with Logit demand as considered by \cite{CCDP20,FGK24}. In \cite{ACKRZ24} similar collusion was demonstrated in various cases where only one of the agents is regret-minimizing.

\begin{theorem} \label{thm:collusive_cce}
     For a simple symmetric Bertrand duopoly game with Logit demand 
     and no outside option, there exist external-regret-minimizing learning agents that in repeated play against each other reach 
     arbitrarily high collusive prices (even though the {Nash} 
     prices are finite).
\end{theorem}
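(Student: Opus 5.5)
The plan is the standard two-part route. First, I construct a coarse correlated equilibrium (CCE) of the stage game in which prices are arbitrarily high. Second, I build external-regret-minimizing agents whose joint empirical play converges to that CCE.

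\textbf{Setting.} Take zero costs and equal qualities, so agent $i$'s profit at prices $(p_1,p_2)$ is
\[
u_i(p_i,p_{-i}) = \frac{p_i e^{-p_i}}{e^{-p_i}+e^{-p_{-i}}} = \frac{p_i}{1+e^{p_i-p_{-i}}}.
\]
Since there is no outside option, the first-order condition $s + p\cdot(-s(1-s))=0$ at the symmetric point gives the Nash price $p=1/(1-s)=2$. Meanwhile the joint profit at $(p,p)$ is $p$, which is unbounded.

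\textbf{Step 1: the CCE.} I restrict to distributions on the diagonal $\{(p,p)\}$, where each agent earns $p/2$. Let $p$ be drawn from a truncated ``equal-revenue'' distribution with $\Pr[p\ge x]=a/x$ on $[a,M]$, with an atom at $M$. The CCE condition is that for every fixed deviation $q$,
\[
\mathbb{E}_p\bigl[u(q,p)\bigr] \le \mathbb{E}_p[p/2] = \Theta(a\ln(M/a)).
\]
I bound the deviation payoff using two facts:
\[
u(q,p)\le q e^{p-q}\ \text{for } q>p, \qquad u(q,p)\le q\ \text{always}.
\]
Hence $u(q,p)\le q\,\mathbf{1}[p\ge q-t]+q e^{-t}$ for any shift $t$. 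Choosing $t\approx\ln q$ bounds the deviation by
\[
q\cdot\frac{a}{q-\ln q}+O(1) = a+O(1),
\]
with the $O(1)$ term independent of $M$. The equilibrium side, by contrast, grows like $\tfrac{a}{2}\ln(M/a)$. So for $M$ large enough the CCE inequalities hold for all $q\ge 0$, and the expected price $a(1+\ln(M/a))$ tends to infinity. I then discretize to a finite support with rational weights, keeping a strict slack so the inequalities survive discretization.

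\textbf{Step 2: the agents.} Fix a deterministic cyclic sequence of joint price pairs whose empirical frequencies equal the discretized CCE. Each agent plays its coordinate of the cycle as long as the opponent's observed prices match the script. Upon any observed deviation, it switches permanently to a standard no-regret algorithm such as MW over a finite price grid. Regret is then checked against two possible opponent behaviors.
\begin{itemize}
\item Against an opponent that follows the script, the agent's average payoff converges to the CCE value, and every fixed price $q$ earns on average at most that. So regret is $O(\text{cycle length})=o(T)$.
\item Against an opponent that deviates at time $\tau$, regret up to $\tau$ is $O(1)$ by the same argument applied to the prefix. Regret after $\tau$ is sublinear by the MW guarantee, and summing the two gives $o(T)$.
\end{itemize}
If the comparison class is all real prices, I would either restrict the action space to a fine finite grid or note that $u$ is Lipschitz in $q$ on the relevant range and bounded above by a constant for large $q$. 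When both agents follow the script, they reach average prices as high as desired.

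\textbf{Main obstacle.} I expect the main obstacle to be Step 1: getting a bound on $\sup_q \mathbb{E}_p[u(q,p)]$ that is uniform in $M$. The logit smoothing lets a deviator slightly undercut and capture a constant fraction of demand even when $q>p$, so the ``$q\cdot\Pr[p\ge q]$'' heuristic must be made rigorous. The tail-shift argument above is what I would try first. If it proves too lossy for small $q$, I would raise the lower support point $a$ well above the Nash price, so that deviations with $q\lesssim a$ earn at most $q/2+O(1)\le a/2+O(1)$.
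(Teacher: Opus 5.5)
Your proposal is correct, but it takes a different route from the paper in both steps.

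\textbf{The CCE.} The paper uses a two-point diagonal distribution: $(p_L,p_L)$ with probability $q=p_H/(p_L+p_H)$ and $(p_H,p_H)$ otherwise, with $p_L=V$. The weights are chosen so that $qp_L=(1-q)p_H$. It then checks deviations by splitting at $\hat p=p_H/p_L$. Low deviations are handled by a sharp profit-gap lemma ($\max_{p}u(p,p_{-i})<p_{-i}-1$ whenever $p_{-i}>2$). High deviations are handled by the tail condition $p_H^2e^{-p_H/p_L}\le p_L^2e^{-p_L}$.

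Your equal-revenue distribution gets its slack differently. The on-path payoff grows like $\tfrac a2(1+\ln(M/a))$, while any fixed deviation earns an amount bounded uniformly in $M$. So you need only the crude bounds $u\le q$ and $u\le qe^{p-q}$, with no first-order analysis. The cost is a discretization step that the paper's finitely supported construction avoids. That step is easy: rounding support points down on a fine grid only lowers every deviation payoff, since $u(q,p)$ is increasing in the opponent's price $p$. It lowers the on-path payoff by at most half the mesh.

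Two small corrections. First, $qa/(q-\ln q)=a+a\ln q/(q-\ln q)$, so the excess over $a$ is $O(\ln a)$ rather than an absolute constant. This is harmless, since only independence from $M$ matters. Second, taking $a=V$ directly gives support in $[V,M]$, so your fallback of raising $a$ is unnecessary.

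\textbf{The agents.} The paper simply invokes the known converse that every CCE of a finite game arises as the limit of empirical play of some no-external-regret dynamics. Your scripted grim-trigger agents with an anytime MW fallback prove this explicitly, by splitting the regret at the first deviation time. This makes the existence claim self-contained. The trigger is observable in the experts setting, because $u(q,b)$ is strictly increasing in $b$ for $q>0$, so the reward vector reveals the opponent's price.
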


Having established that external-regret minimization does not suffice to preclude collusion, 
one may naturally seek minimal conditions that do. Prior research highlights the role of Mean-Based learning \cite{BMS18} in driving competitive outcomes across various economic settings \cite{KN22, D+22, F+21, ACKRZ24}. The basic property of this class of learning algorithms is that they learn not
to play actions whose {\em average} performance so far has been inferior, and this class was shown by \cite{BMS18} to include several classical learning algorithms such as MW, FTPL, and EXP3.

Specifically, \cite{BDO24} establishes that Mean-based learners succeed in iteratively eliminating dominated strategies in any finite game. As \cite{BDO24} also show, in classic Bertrand games this implies convergence to 
competitive prices, i.e. lack of collusion. (A conceptually related elimination dynamic has long been known in the different context of ``monotonic'' evolutionary strategies \cite{W97}). While we would not go as far as attempting to generally equate elimination of dominated strategies with lack of collusion, one may confidently say that playing a dominated strategy (even an iteratively dominated one) is a clear sign of either an irresponsible lack of optimization or some long term cooperative behavior -- in our case collusion.

We present two contributions to this literature.  Our technical contribution shows that in symmetric Bertrand games with Logit demand, iterative elimination of dominated strategies results in competitive prices. This extends the results of \cite{BDO24} from Bertrand games with standard (low-bid takes all) demand or ``linear demand'', to the more technically challenging Logit demand case considered by \cite{CCDP20,FGK24}. Our main conceptual contribution is extending the class of learning algorithms for which such lack of collusion is proven significantly beyond the class of Mean-Based algorithms.

\subsection*{Domination-Avoiding Agents}

Mean-based learning algorithms must learn to avoid actions whose
mean performance so far has been inferior relative to some other action.  The definition of ``mean'' here is quite literal: the mathematical
average of the rewards achieved historically.  This is a rather
specific and fragile condition on the learning algorithm.  For example,
standard algorithms like Multiplicative-Weights cease to be Mean-based 
if the learning rate is not constant or if they get some finite additional
context.  It is also known that no internal-regret minimizing algorithm can be Mean-based \cite{DSS19}.  We wish to formalize a minimally-restrictive property on learning agents that will ensure lack of collusion.  Our definition will only require the agent to learn to avoid actions that have almost always been bad so far, not necessarily all those whose average performance was sub-par. This class will turn out to be significantly more general and robust than the class of Mean-Based agents.

Our setting is rather abstract and assumes that every time step $t=1,2, \dots,$ our agent plays some action $a^t \in A$ and receives some reward $r^t(a^t)$, where the reward function $r^t : A\rightarrow \mathbb{R}_+$ may be chosen, in general, by an adversary. In our application, it simply represents our agent's utility given the actual actions of the other agents in the game, i.e., $r^t_i(a_i)=u_i(a_i, a_{-i}^t)$. The natural model is
that the agent ``sees'' the full reward function $r^t(\cdot)$ (the ``experts'' setting) but the definitions apply to whatever informational setting we desire for the agent including the so-called bandit setting.  We first define 
what it means for an action $a$ to be ``empirically dominated'' at a certain time.

\begin{definition}[$\delta$-Empirical Domination]
    An action $a \in A$ is called $\delta$-empirically-dominated by $a' \in A$ up to time $t$ if:
    \[
    \left| \left\{ 1 \le s \le t \mid r^s(a) \ge r^s(a') \right\} \right| \le \delta t
    \]
    We say that $a$ is $\delta$-empirically-dominated (in short, $\delta$-dominated) up to time $t$ if
    for some $a' \in A$ it is $\delta$-empirically-dominated by $a'$ up to time $t$.
\end{definition}

Our requirement from a ``Domination-Avoiding'' agent is that it ``learns to almost never use empirically dominated actions''.  

\begin{definition} [Domination-Avoiding] \label{def:da}
    We say that a learning algorithm is Domination-Avoiding (DA) if for any adversary and any $\epsilon>0$
    there exists $\delta>0$ such that for all sufficiently large $T$ we have that 
    \[
    \Pr_{t \sim {U}(1 \dots T)}\left[a^t \text{ is } \delta\text{-dominated} \text{ up to time } t \right] \le \epsilon
    \]
    where the probability is taken over the uniform random choice of a time $t$ in the first $T$ time steps as well as over the internal randomization
    of the agent and the adversary.
\end{definition}

As we will observe momentarily, a variety of classes of learning agents satisfy
this Domination-Avoiding definition. After all, any learning algorithm worth its salt
should eventually learn to avoid playing actions that have consistently fared badly.
Our main positive result regarding Domination-Avoiding agents is that when such agents play a game against each other, then they collectively learn not to play
any action that is removable by iterated elimination of purely dominated strategies.

\begin{theorem}  \label{thm:da_IESPDS}
    Consider any finite $\mathcal{N}$-player game and any $\mathcal{N}$ Domination-Avoiding agents that repeatedly play this game against each other.
    For every action $a_i$ that is removable by Iterated Elimination of Strictly Purely Dominated Strategies we have that 
    \[
    \lim_{T\rightarrow \infty} \Pr_{t \sim {U}(1 \dots T)} [\text{player } i \text{ plays } a_i \text{ at time } t]=0
    \]
    where the probability is taken over a  {uniform} random choice of a time $t$ in the first $T$ time steps as well as the internal randomization of the agents.
\end{theorem}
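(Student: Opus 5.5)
We argue by induction on the elimination rounds. Fix an order of Iterated Elimination of Strictly Purely Dominated Strategies: surviving sets $A_i^0 = A_i \supseteq A_i^1 \supseteq \dots \supseteq A_i^K$, where each $a_i \in A_i^{k}\setminus A_i^{k+1}$ is strictly dominated by some pure $a_i' \in A_i^k$ against every profile in $A_{-i}^k=\prod_{j\neq i}A_j^k$. The inductive claim for round $k$ is that every action eliminated in rounds $<k$ has vanishing long-run frequency, i.e.
\[
\lim_{T\to\infty}\Pr_{t\sim U(1\dots T)}\bigl[a^t_{-i}\notin A^k_{-i}\bigr]=0 \quad\text{for every } i.
\]
This follows from the per-action statement by a union bound over the finitely many actions. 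The base case $k=0$ is trivial. The theorem is the claim at $k=K$ together with the last elimination step.

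For the inductive step, fix $a_i \in A_i^k\setminus A_i^{k+1}$ and its dominator $a_i'$. Write $\eta_T$ for the fraction of times $s\le T$ at which $a^s_{-i}\notin A^k_{-i}$. Whenever $a^s_{-i}\in A^k_{-i}$ we have $r^s_i(a_i)=u_i(a_i,a^s_{-i})<u_i(a_i',a^s_{-i})=r^s_i(a_i')$. So the number of $s\le t$ with $r^s(a_i)\ge r^s(a_i')$ is at most the number of "bad" times up to $t$. Given $\epsilon>0$, let $\delta$ be the value supplied by Definition~\ref{def:da} for player $i$. It then suffices to show that for most $t\le T$ the bad count up to $t$ is at most $\delta t$; at all such $t$, $a_i$ is $\delta$-dominated up to $t$. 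Domination-Avoidance then gives that $a_i$ is played at a uniformly random time with probability at most $\epsilon$ plus the probability that $t$ is not such a good time, and letting $\epsilon\to 0$ finishes the step.

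The main obstacle is converting "bad times have vanishing average density up to $T$" into "for most $t\le T$ the prefix density up to $t$ is small". This has two sides. First, small prefixes $t<\gamma T$ carry at most a $\gamma$ fraction of the mass. Second, for $t\ge\gamma T$ the prefix bad count is at most $\eta_T T \le (\eta_T/\gamma)t$, which is below $\delta t$ once $\eta_T<\gamma\delta$. A second subtlety is that everything is in expectation over the agents' randomness, so $\eta_T$ is a random variable whose expectation tends to $0$. I would use Markov's inequality, $\Pr[\eta_T\ge\gamma\delta]\le \mathbb{E}[\eta_T]/(\gamma\delta)$, which goes to $0$ as $T\to\infty$ for fixed $\gamma$ and $\delta$. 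Here $\mathbb{E}[\eta_T]$ is exactly the induction hypothesis quantity. Combining these pieces, for large $T$ the probability that $a_i$ is played at a random time is at most $\epsilon+\gamma+o(1)$. Taking $\gamma$ and $\epsilon$ arbitrarily small gives the limit $0$.

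A minor point to check is that Definition~\ref{def:da} quantifies over adversaries. The other agents, together with their randomness, constitute an adaptive adversary for player $i$, so the guarantee applies directly, and the joint probability space is handled consistently by the expectation argument above.
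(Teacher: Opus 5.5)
Your proposal is correct and is essentially the paper's argument. Both induct along the IESPDS rounds and note that exception rounds for $a_i$ against its dominator can occur only when opponents play actions already shown to vanish. Both then use Markov's inequality to turn this into high-probability $\delta$-domination and invoke Domination-Avoidance.

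The paper packages it as a contradiction instead. It defines surviving actions by positive upper-time-average, shows in Lemma~\ref{lem:surviving_undominated_N} that these cannot be strictly dominated against surviving opponent profiles, and then inducts $S_i\subseteq S_i^k$. It also applies Markov separately at each $t$, $\Pr(\neg D^t)\le \frac{1}{\delta t}\sum_{s\le t}\Pr(a^s_{-i}\notin S_{-i})\to 0$. This makes your $\gamma$-cutoff unnecessary, though harmless.
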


We show that in a Bertrand duopoly game with Logit demand all strategies that are far from the competitive duopoly equilibrium prices are removable by Iterated Elimination of Strictly Purely Dominated Strategies. Formally we consider any finite pre-defined grid of possible prices {$p^{(1)}<p^{(2)}< \dots <p^{(M)}$, that is $\epsilon$-dense, $p^{(j)}-p^{(j-1)} \le \epsilon$, and whose range contains the equilibrium,  $p^{(1)} \le p^* \le p^{(M)}$,} and show that every price $p$ that is at distance more than $2\epsilon$ from the competitive Nash equilibrium price $p^*$ is removable by Iterated Elimination of Strictly Purely Dominated Strategies. From this we deduce our main corollary:

\begin{corollary}
    Consider two Domination-Avoiding agents that repeatedly play a symmetric Bertrand duopoly game with Logit demand over any finite grid of possible prices {$p^{(1)}<p^{(2)}< \dots <p^{(M)}$, that is $\epsilon$-dense, $p^{(j)}-p^{(j-1)} \le \epsilon$, and whose range contains the equilibrium,  $p^{(1)} \le p^* \le p^{(M)}$} where $p^*$ is the competitive Duopoly equilibrium price (of the continuous Bertrand game). Then
    \[
    \lim_{T \rightarrow \infty} \Pr_{t \sim {U}(1 \dots T)}[ p^*-2\epsilon \le p_i^t \le p^*+2\epsilon] = 1
    \] 
    where $p_i^t$ is the price played by player $i$ at time $t$ and where the probability is taken over a uniformly random time $t$ in the first $T$ steps as well as over the internal randomization of the agents.
\end{corollary}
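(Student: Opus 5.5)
The corollary follows by combining \cref{thm:da_IESPDS} with the structural claim stated just before it: in the symmetric Bertrand duopoly with Logit demand, on an $\epsilon$-dense finite grid whose range contains $p^*$, every price at distance more than $2\epsilon$ from $p^*$ is removable by Iterated Elimination of Strictly Purely Dominated Strategies. Taking that claim as the main technical input, the corollary is short. The grid is finite, so the set $B$ of prices with $|p-p^*|>2\epsilon$ is finite. For each $p\in B$, \cref{thm:da_IESPDS} gives
\[
\lim_{T\to\infty}\Pr_{t\sim U(1\dots T)}[p_i^t=p]=0 .
\]
A union bound over the finitely many $p\in B$ then gives $\lim_{T\to\infty}\Pr[p_i^t\in B]=0$, which is exactly the claimed statement.

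The substance therefore lies in proving the elimination claim, and I would do it as follows. Write the profit as $\pi(p,q)=(p-c)D(p,q)$, where $D(p,q)=e^{(a-p)/\mu}/(e^{(a-p)/\mu}+e^{(a-q)/\mu})$ since there is no outside option. The best response $BR(q)$ solves a first-order condition of the form $p-c=\mu/(1-D(p,q))$.

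First I would show that $BR$ is continuous, strictly increasing, and has slope strictly less than $1$. This follows from implicit differentiation: $\partial_q BR = D(1-D)/(1-D(1-D)+\dots)$, which lies in $(0,1)$. I would also show that $p\mapsto\pi(p,q)$ is strictly quasi-concave, i.e.\ single-peaked at $BR(q)$. Consequently $BR$ has a unique fixed point, namely $p^*$.

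Next I would run the elimination in rounds on a shrinking interval $[L_k,U_k]$ of surviving prices. The lever is single-peakedness, combined with the fact that $\pi(p,q)$ is increasing in $q$ uniformly enough. A price $p>BR(U_k)+\epsilon$ is strictly dominated by the grid point just below $BR(U_k)$ against every surviving $q\le U_k$. This requires that the peak $BR(q)\le BR(U_k)$ lies to the left of both grid points; $\epsilon$-density supplies a grid point in $[BR(U_k)-\epsilon, BR(U_k)]$. Symmetrically, prices below $BR(L_k)-\epsilon$ are dominated by the grid point just above $BR(L_k)$. The first round needs bounded initial support, which the finite grid gives, with $L_0=p^{(1)}$ and $U_0=p^{(M)}$.

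The updates are then $U_{k+1}\approx BR(U_k)+\epsilon$ and $L_{k+1}\approx BR(L_k)-\epsilon$. Because $BR$ is a contraction toward $p^*$, the iterates converge. I expect the main obstacle to be controlling the additive $\epsilon$ slack so that the limit lands within $2\epsilon$ rather than within $\epsilon/(1-\text{slope})$.

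To handle this, I would compare the dominating grid point directly with the dominated one at the level of strict domination. Once $U_k$ is within $\epsilon$ of $p^*$, any $p>p^*+2\epsilon$ lies to the right of a grid point in $(p^*,p^*+\epsilon]$, and hence to the right of every $BR(q)\le BR(U_k)<U_k$. It is therefore strictly dominated by that grid point, by single-peakedness. Iterating the process until it stabilizes, using finiteness of the grid, yields exactly the prices within $2\epsilon$. I would need to double-check the one-sided comparison carefully to confirm that the slack does not accumulate.
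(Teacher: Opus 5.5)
Your first paragraph is exactly the paper's proof of the corollary. It applies \cref{thm:da_IESPDS} to each of the finitely many prices removed by IESPDS and finishes with a union bound, and it is complete once you cite \cref{thm:IESPDS_collapse}, which the paper proves separately.

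The gap is in your own proof of that elimination claim, which you rightly identify as the substance. There is first a slip. To use single-peakedness, the dominating point must lie to the \emph{right} of every peak $\mathrm{BR}(q)$ for $q\le U_k$. That means taking the smallest grid point $\ge \mathrm{BR}(U_k)$, not the one just below it; symmetrically, on the bottom side you need the largest grid point $\le \mathrm{BR}(L_k)$.

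More seriously, even after this fix your criterion is intrinsically too weak on the upper side, not just loosely analysed. It stalls at any $U$ whose lower grid neighbour lies below $\mathrm{BR}(U)$, which only forces $U-\mathrm{BR}(U)<\epsilon$. With $\alpha=0$ (paper's parametrization) the first-order condition is $\mu(p-c)=1+e^{\mu(q-p)}$. Writing $x=q-\mathrm{BR}(q)$ gives $q=c+x+(1+e^{\mu x})/\mu$, so the stall condition yields only
\[
U-p^*<\epsilon+\frac{e^{\mu\epsilon}-1}{\mu}.
\]
This exceeds $2\epsilon$ for every $\epsilon>0$, and suitable grids essentially attain it. (On the lower side the same criterion does give $p^*-L<\epsilon+(1-e^{-\mu\epsilon})/\mu<2\epsilon$, so the failure is one-sided.)

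Concretely, take $c=0$, $\mu=1$ (so $p^*=2$) and the grid $\{0.5,1.5,2.5,3.5,4.5\}$ with $\epsilon=1$. Then $\mathrm{BR}(4.5)\approx 3.56$, so no grid point lies in $[\mathrm{BR}(4.5),4.5)$, and your procedure never removes $4.5>p^*+2\epsilon$. Yet IESPDS does remove it: for instance, $u_i(3.5,4.5)=3.5/(1+e^{-1})\approx 2.56>2.25=u_i(4.5,4.5)$.

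Your last paragraph does not repair this. It presupposes $U_k\le p^*+\epsilon$, in which case nothing above $p^*+2\epsilon$ survives anyway, so the argument is circular. Also, you set the outside option to zero, whereas the corollary concerns the general Logit model with $\alpha\ge 0$.

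The missing idea is to compare \emph{adjacent} grid points directly, allowing the dominating price to sit to the left of the opponent-dependent peak, and to identify the worst-case opponent price. As in \cref{lem:undercutting}, let $\Delta=p^{(j)}-p^{(j-1)}\le\epsilon$. Strict dominance of $p^{(j)}$ by $p^{(j-1)}$ against $q$ is then equivalent to
\[
\bigl(e^{\mu(p^{(j)}-q)}+\alpha e^{\mu p^{(j)}}\bigr)\bigl[(p^{(j)}-c)(1-e^{-\mu\Delta})-\Delta\bigr]>\Delta .
\]
The first factor is decreasing in $q$, so it suffices to check the matching price $q=p^{(j)}$. This gives a threshold $p^{(j)}>p^+$ that is the same in every round. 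The cascade therefore moves one grid step at a time with no accumulating slack. Then \cref{lem:bounds} supplies $p^-<p^*<p^+\le p^-+2\epsilon$, and the bottom-up side is symmetric via $\Phi^-$.
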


We now enumerate some of the types of learning algorithms that fall under this definition of Domination-Avoiding.

\paragraph{Fictitious Play} A fictitious play agent always best-responds to the distribution of actions of the other players that has been seen so far. Clearly any action that was almost always previously dominated cannot be such a best-reply.

\paragraph{Mean-Based Learners} Every Mean-Based learner is also a Domination-Avoiding learner.  This holds directly for the infinite-horizon variant of Mean-based \cite{D+22}, while in the original formalism of Mean-based that uses a sequence of finite horizons \cite{BMS18}, to get an infinite-horizon Domination-Avoiding learner, one uses the standard doubling strategy, i.e. run a sequence of finite horizon learning algorithms with horizon lengths 2, 4, 8\dots. As shown in \cite{BMS18}, a large number of well-known learning algorithms such as FTPL, EXP3, MW, are Mean-based.

\paragraph{Multiplicative Weights} The Multiplicative Weights algorithm is Mean-based, as long as the time-horizon is known and the learning rate $\eta$ is constant. Natural infinite-horizon implementations usually use a decreasing learning rate such as $\eta^t \propto 1/\sqrt{t}$, and they are not Mean-based. We do show, however, that such variable-rate Multiplicative-Weights learning is still Domination-Avoiding.

\paragraph{Internal-Regret Minimizers} As opposed to general external-regret minimizing algorithms that may fail to be Domination-Avoiding, we show that internal-regret (or swap-regret) minimizers \cite{CL06} are always Domination-Avoiding.
It is interesting to note the contrast to the result of \cite{DSS19} that shows that {\em no} internal-regret minimizing learner can be Mean-based.
    
\paragraph{Contextual Learning} Many learning algorithms act according to some ``extra'' information that they have access to. This may be some information that is gathered from the environment or may be some type of ``memory'' of the learning algorithm itself. Such information is often called ``context'' and a contextual variant of a learning algorithm will essentially run a separate copy of the underlying algorithm for each possible context. For instance, the S-EXP3 algorithm \cite{BC12} handles contextual bandits by running an independent copy of standard EXP3 for every possible context. As another example, the Q-learning algorithm, run by \cite{CCDP20}, that started our journey uses the context (state) of the last action of the player and its opponent. It turns out that the class of Domination-Avoiding learners, is ``closed'' under finite contexts.  I.e., if there are finitely many possible contexts, and a Domination-Avoiding learning algorithm is run for each one of them separately, then the combined algorithm is Domination-Avoiding as well.

A significant part of our paper is proving that all these classes of learning agents are indeed Domination-Avoiding.

\begin{theorem}
    (informal) All the examples above are indeed Domination-Avoiding.
\end{theorem}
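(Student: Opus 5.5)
The plan is to prove each item separately, organised around one reusable observation. If $a$ is $\delta$-dominated by $a'$ up to time $t$, then $a$ and $a'$ compare in a very lopsided way: on all but $\delta t$ of the rounds $s\le t$ we have $r^s(a)<r^s(a')$. Since the action set is finite, I would first turn this strict but unquantified gap into a quantified one. I expect the intended setting is a finite game, so that rewards come from a finite set of values; otherwise the definition of DA needs a uniform-gap variant. In the finite case there is a $\gamma>0$ with $r^s(a')-r^s(a)\ge\gamma$ whenever the inequality is strict. Writing $R$ for a bound on the rewards, it follows that
\[
\sum_{s\le t}\bigl(r^s(a')-r^s(a)\bigr)\ \ge\ \gamma(1-\delta)t-R\delta t .
\]
So for $\delta<\gamma/(\gamma+R)$, a $\delta$-dominated action has cumulative reward at least $ct$ below $a'$, with $c>0$ a constant. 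With this in hand, the items go as follows.

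\begin{itemize}
\item \textbf{Fictitious play.} The best response to the empirical distribution maximises cumulative reward, so it can never be a $\delta$-dominated action. The probability in the DA definition is therefore $0$.

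\item \textbf{Mean-based learners.} Such a learner plays any action whose cumulative reward trails another action's by more than $\gamma_T T$ with probability at most $\gamma_T\to0$. I would apply this at times $t\ge\sqrt{T}$ (or, more generally, for $t$ larger than a vanishing fraction of $T$), where $ct>\gamma_T T$ holds eventually. Each term then contributes probability $o(1)$, and the early times contribute at most $\epsilon/2$ of the uniform measure. For the finite-horizon formalism of \cite{BMS18}, I would additionally run a doubling argument. At time $t$ inside the epoch $[2^k,2^{k+1})$, the history covers at least roughly half the rounds, so domination up to $t$ implies near-domination within the epoch, with $\delta$ doubled.

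\item \textbf{Multiplicative weights with $\eta^t\propto1/\sqrt t$.} The probability of playing $a$ is at most $\exp\bigl(-\eta^t ct\bigr)=\exp(-c'\sqrt t)$, which is summable. This gives the claim directly.

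\item \textbf{Finite contexts.} I would split the rounds $1,\dots,t$ by context. If $a$ is $\delta$-dominated globally, then by averaging, the contexts $k$ in which $a$ is not $\sqrt\delta$-dominated in their own subsequence account for at most $\sqrt\delta\,t$ of the rounds. Contexts that have appeared only $o(t)$ times contribute negligible mass as well. Applying the DA guarantee of each copy, with $\epsilon/K$ where $K$ is the number of contexts, and summing over the $K$ contexts finishes the argument.
\end{itemize}

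\textbf{Internal-regret minimizers.} This is the item I expect to be the main obstacle. The natural argument is this: if the learner put mass $\epsilon$ on $\delta$-dominated actions, swapping each such play of $a$ to its dominator $a'$ would gain about $c\epsilon T$, contradicting internal regret $o(T)$. The difficulty is that domination is measured on the whole history up to $t$, while the regret gain concerns only the rounds on which $a$ was actually played, and the dominator $a'$ may change over time.

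I would handle this with a time-dependent swap: at each time $t$, map the played action $a^t$ to its current dominator. Such a swap is legitimate for swap regret with respect to time-selection functions, or after partitioning time into geometric blocks and using finitely many fixed swaps per block. Within a block $[T/2^{j+1},T/2^j]$, domination up to the block's start and end gives domination on most of the block's rounds, with $\delta$ inflated by a factor of 2. Finitely many blocks carry all but $\epsilon/2$ of the mass.

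Even so, one must still argue that a positive fraction of the rounds where $a$ is played are rounds where $r(a)<r(a')$. This is where I would need the adversary's rewards to be uncorrelated with the learner's randomization, or else a genuinely randomized-vs-adaptive refinement. I would first try to bound, in expectation, the overlap between the $\delta t$ "good" rounds for $a$ and the rounds on which $a$ is played, showing that this overlap is small compared with the total play of $a$. Failing that, I would fall back on the observation that if $a$ is played mostly on its good rounds, a swap in the other direction yields a regret contradiction.
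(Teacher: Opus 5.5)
\textbf{The gap is in the internal-regret item.} You leave that case open, and your fallback is to assume the adversary's rewards are uncorrelated with the learner's randomization. The statement does not allow this: DA must hold against every adversary, including adaptive ones. The obstacle you describe is also not real. Take $c$ to be the minimum reward gap (your $\gamma$), with rewards normalised to $[0,1]$. If $a$ is $\delta$-dominated by $a'$ up to time $t$, then the rounds $s\le t$ with $r^s(a)\ge r^s(a')$ number at most $\delta t$ \emph{in total}, deterministically. So at most $\delta t$ of the plays of $a$ in $[1,t]$ can fall on such rounds, however the adversary correlates with the learner. Let $m_t(a)$ be the number of plays of $a$ up to $t$. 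For the fixed swap $\phi(a)=a'$ you get, pathwise, $R^\phi_t\ge c\,m_t(a)-(c+1)\delta t$. The scenario where $a$ is played ``mostly on its good rounds'' is impossible once $\delta$ is small relative to how often $a$ is played while dominated. No overlap estimate and no reverse swap are needed.

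Your other two concerns have simpler fixes than time-selection swaps, which plain no-swap-regret does not provide. Your geometric-block sketch is also not right as stated, because $D^t$ at an interior time says nothing about domination at the block's endpoints.
\begin{itemize}
\item For the changing dominator, pigeonhole over the $|A|(|A|-1)$ ordered pairs. If $\operatorname{UTA}(\{d^t(\delta)\})>\epsilon$, then some fixed pair $(a,a')$ satisfies $\frac{1}{T_k}\sum_{t\le T_k}\Pr(a^t=a\wedge D^t)>\epsilon/|A|^2=:\epsilon'$ along a sequence $T_k\to\infty$. One ordinary swap then suffices.
\item For the moving history, evaluate at the random time $\tau_k=\max\{t\le T_k: D^t\}$. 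Every dominated play of $a$ occurs by $\tau_k$, and $D^{\tau_k}$ caps the exceptions at $\delta\tau_k$. Hence $\mathbb{E}[R^\phi_{\tau_k}]\ge(c\epsilon'-(c+1)\delta)T_k$, which is positive and linear once $\delta<c\epsilon'/(c+1)$.
\item Since $\tau_k$ is random but no-swap-regret constrains deterministic horizons, use that $t\mapsto R^\phi_t$ is $1$-Lipschitz. Averaging $\mathbb{E}[R_{\text{swap}}(t)]$ over $t\le T_k$ gives at least $(\mathbb{E}[R^+_{\tau_k}])^2/(2T_k)=\Omega(T_k)$. So some deterministic $t_k^*\le T_k$, necessarily of order $T_k$, has linear expected swap regret. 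This is the paper's argument.
\end{itemize}

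A smaller point on multiplicative weights with $\eta^s=\lambda s^{-1/2}$. With the per-step update of Algorithm~1, the exponent is $\sum_{s<t}\eta^s\Delta^s$, not $\eta^t\sum_{s<t}\Delta^s$. Your bound $\exp(-\eta^t ct)$ is valid for the lazy variant $p^t(a)\propto e^{\eta^t\sigma^{t-1}(a)}$. Under per-step updates, however, the $\delta t$ exceptions can sit at the start, where $\eta^s$ is largest. You then need $(c+1)\sum_{s\le\delta t}\eta^s\approx 2\lambda(c+1)\sqrt{\delta t}$ to be beaten by $c\sum_{s<t}\eta^s\approx 2\lambda c\sqrt t$, which requires $\sqrt\delta<c/(c+1)$. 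The remaining items (fictitious play, mean-based with doubling, contexts) are fine in outline.
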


It is now perhaps a good time to go back to Q-learning and reflect on how it fails to be Domination-Avoiding, allowing such agents to collude \cite{CCDP20}. The first answer that one may wish to give is that Q-learning algorithms have ``state'' according to which they operate, specifically in the simulations of \cite{CCDP20} the prices offered depended on those in the previous time step.  But this by itself cannot be the solution to this mystery of collusion since such state falls under the category of ``context'' mentioned above which does leave us in the Domination-Avoiding class. The solution is in what Q-learning agents strive to optimize in every step: not just the immediate payoff but rather
the sum of the immediate payoff and the future expected payoffs from the new state that will be reached due to the action.  When such agents play the duopoly game against each other, each of them is able to learn that reducing the price in some step leads to a new state (that we may call ``price competition prone'') which leads to low payoff in the long term, and thus will avoid such an action. Such foresight is not Domination-Avoiding since a Domination-Avoiding agent cannot choose an action that is deemed to be bad
in the short term even if that action will later lead to long-term gains.  In a sense, Domination-Avoiding agents cannot pass the ``Marshmallow test'' \cite{ME70} of delaying gratification, as Q-learning agents are explicitly designed to do. We view this as the main limitation of Domination-Avoiding agents and the main reason why ``sufficiently clever'' agents will fail to meet this definition and be able to collude. We believe that analyzing and controlling collusion as well as beneficial cooperation between such ``sufficiently clever'' agents is an important direction for further research.

\section{Preliminaries and Notations}

\subsection{The Bertrand Logit Duopoly Model}

We consider the symmetric Bertrand pricing game between two competing producers. Each producer $i \in \{1, 2\}$ selects a price $p_i \in [0, \infty)$ for their product and captures a fraction of the total market demand, denoted as $Q_i$. To yield smoothed market shares—unlike the abrupt demand shifts in the classic Bertrand model—consumer demand is governed by a Logit function \cite{APT92} that includes an \emph{outside good} representing the consumer's option to not purchase. This specific formulation has recently been adopted as the standard benchmark environment in the algorithmic collusion literature \cite{CCDP20, FGK24, ACKRZ24}. In our symmetric setting, both producers offer products of identical ``quality'' (intrinsic utility to the buyers) $a$, while this outside option has quality $a_0 \in [-\infty, \infty)$.

The inclusion of the outside good means total market demand can be elastic; as prices rise, consumers can leave the market entirely. The fraction of demand captured by producer $i$, given the opponent's price $p_{-i}$, is:
\[
Q_i(p_i, p_{-i}) = \frac{e^{\mu(a - p_i)}}{e^{\mu(a - p_i)} + e^{\mu (a - p_{-i})} + e^{\mu a_0}}
\]
where $\mu > 0$ denotes consumer price sensitivity, governing the degree of product substitution. To simplify our notation for the relative strength of the outside option, we will use 
the constant $\alpha = e^{\mu (a_0 - a)}$. The boundary case where $a_0 = -\infty$ (yielding $\alpha = 0$) represents a market with no viable outside option, meaning total demand is perfectly inelastic and always sums to $1$.

Assuming a symmetric constant marginal cost ``of production'' $c \ge 0$, the expected payoff (profit) for player $i$ is:
\[
u_i(p_i, p_{-i}) = (p_i - c) \cdot Q_i(p_i, p_{-i})
\]
To evaluate whether algorithmic behavior is collusive, we must establish the standard economic benchmarks for this environment:

\paragraph{The Competitive Duopoly (Nash) Price:} The static Nash Equilibrium is defined as the price profile where neither producer can strictly increase their expected profit by unilaterally deviating, meaning their prices are mutual best responses. In this continuous symmetric game, 
the standard analysis that solves the first-order conditions yields the unique competitive Nash equilibrium price $p^*$, which is implicitly defined by the equation:
\[
p^* = c + \frac{1}{\mu} \left[ 1 + \frac{1}{1 + \alpha e^{\mu p^*}} \right]
\]
In the boundary case with no outside option ($\alpha = 0$), this simplifies to the explicit price $p^* = c + \frac{2}{\mu}$.

\paragraph{The Monopoly Price:} Conversely, the monopoly price is the price that maximizes the producers' joint profit. In the boundary case with no viable outside option ($\alpha = 0$), total demand is perfectly inelastic, causing the optimal collusive price to diverge to infinity. When an active outside option exists ($\alpha > 0$), consumer exit bounds the joint profit, yielding a unique, finite monopoly price. This price serves as the theoretical ceiling for evaluating algorithmic collusion.

\paragraph{Finite Price Grid:} While the continuous game provides our theoretical economic benchmarks, standard algorithmic learning agents operate over finite action spaces. We formalize this by restricting the agents to choose prices from an arbitrary, finite, and bounded set of discrete prices $\mathcal{P} = \{p^{(1)}, p^{(2)}, \dots, p^{(M)}\}$ where $0 \le p^{(1)} < p^{(2)} < \dots < p^{(M)}$. We assume this price grid is \emph{$\epsilon$-dense}, meaning $p^{(j)} - p^{(j-1)} \le \epsilon$ for all $1 < j \le M$. We also assume $\mathcal{P}$ contains prices both below and above the competitive benchmark, $p^{(1)} \le p^* \le p^{(M)}$.

\paragraph{The Repeated Game Framework:} To analyze the emergent behavior of algorithmic pricing, we evaluate this static stage game within a repeated framework. A repeated game consists of an infinite sequence of discrete time steps $t \in \mathbb{N}$. At each step $t$, players simultaneously select a price from their finite grid $\mathcal{P}$ and receive the corresponding instantaneous payoff based on the realized joint price profile. The remainder of our analysis focuses on the asymptotic dynamics that unfold when learning agents interact in this repeated Bertrand game.

\subsection{Iterated Elimination of Strictly Purely Dominated Strategies}

To characterize the asymptotic behavior of learning agents, we rely on the standard solution concept of Iterated Elimination of Strictly Purely Dominated Strategies (IESPDS) \cite{FT91, OR94}. Consider a finite normal-form game $\Gamma = (\mathcal{N}, \{A_i\}_{i \in \mathcal{N}}, \{u_i\}_{i \in \mathcal{N}})$. A strategy is strictly dominated if there exists an alternative strategy that yields a strictly higher payoff against all valid opponent profiles. The IESPDS procedure recursively removes these dominated strategies.

\begin{definition}[Iterated Elimination of Strictly Purely Dominated Strategies]
    Let $\Gamma$ be a normal-form game. For each player $i$, let $S_i^0 = A_i$ be the initial set of available pure strategies. We recursively define the surviving set of strategies at round $k+1$ as:
    \[
    S_i^{k+1} = \left\{ a_i \in S_i^k \mid \nexists a'_i \in S_i^k \text{ such that } \forall a_{-i} \in S_{-i}^k, \ u_i(a'_i, a_{-i}) > u_i(a_i, a_{-i}) \right\}
    \]
    The Iterated Elimination of Strictly Purely Dominated Strategies (IESPDS) process repeatedly applies this condition. Because the game is finite, the sequence of sets weakly shrinks and must converge in finite steps to a stable surviving set $S_i^\infty = \bigcap_{k=0}^\infty S_i^k$.
\end{definition}

In subsequent sections, we will apply this iterative elimination process directly to the discrete Bertrand pricing grid $\mathcal{P}$ to bound the set of prices that can survive long-term interaction between DA learning agents.

\subsection{Domination-Avoiding -- Alternative Definition}

While the introductory definition of Domination-Avoiding (DA) agents is intuitive, mathematical derivations are simplified by an alternative formulation. The following equivalent definition is based on the mathematically convenient notion of Upper-Time-Average (UTA):

\begin{definition}[Upper-Time-Average]
    The upper-time-average (UTA) of an infinite sequence of real numbers $\{x^t\}_{t=1}^\infty$ is defined as:
    \[
    \operatorname{UTA}(\{x^t\}) = \limsup_{T \to \infty} \frac{1}{T} \sum_{t=1}^T x^t
    \]
\end{definition}

To formalize Domination-Avoiding agents using this notion, we define $d^t(\delta)$ as the probability—evaluated over the joint randomization of the agent and the adversary—that the agent selects a $\delta$-empirically dominated action at time $t$:
\[
d^t(\delta) = \Pr\left[ a^t \text{ is } \delta\text{-dominated}\text{ up to time } t \right]
\]

\begin{definition}[Domination-Avoiding Agent - Alternative Definition] \label{def:alt_da}
    A learning agent is Domination-Avoiding (DA) if for any adversary and any $\epsilon > 0$, there exists $\delta > 0$ such that:
    \[
    \operatorname{UTA}(\{d^t(\delta)\}) \le \epsilon
    \]
\end{definition}

This asymptotic formulation is mathematically equivalent to the finite-time uniform sampling definition provided in the Introduction. The formal proof of this equivalence is deferred to \cref{app:sec2}.

\section{Regret-Minimizing Algorithms may be Collusive}

Regret minimizing algorithms \cite{CL06} are a widely studied class of algorithms that are often used as foundational mathematical models for how agents learn in unpredictable environments, so they are a natural starting point for our exploration of collusion by learning agents in the Bertrand pricing game studied by \cite{CCDP20, FGK24}.
We ran simulations of classic no-external-regret algorithms --- specifically the Multiplicative Weights (MW) and the Follow The Perturbed Leader (FTPL) algorithms.
We used the baseline parameterization of \cite{CCDP20}: product quality $a=2$, outside good quality $a_0=0$, marginal cost $c=1$, and price sensitivity $\mu=4$. The action space is restricted to a discrete grid of $15$ prices spanning the continuous competitive and monopoly benchmarks. Both no-regret algorithms utilize a learning rate of $5 \times 10^{-3}$, and all simulations are run for $10^6$ time steps to ensure robust limit behavior.

As illustrated in \cref{fig:learning_dynamics}, the empirical pricing trajectory reliably converges to the competitive Nash equilibrium and exhibits no evidence of the supra-competitive pricing typically observed with Q-learning agents. Given this robust empirical behavior, a natural hypothesis is that the no-external-regret property itself guarantees competitive outcomes in this environment.

We demonstrate this hypothesis is false. A well-known property of standard learning algorithms is that when multiple no-external-regret agents interact, their empirical joint frequency of play converges to the set of Coarse Correlated Equilibria (CCE). Conversely, for any CCE in a finite game, there exist sequences of play induced by no-external-regret learning dynamics that converge to that equilibrium \cite{CL06}.

\begin{definition}[Coarse Correlated Equilibrium] \label{def:CCE}
    Given a game $\Gamma$ with $\mathcal{N}$ players, with strategy space $A_i$ and utility function $u_i$ for each player $i$, a joint probability distribution $\sigma$ over $A=\prod_i A_i$ is a Coarse Correlated Equilibrium (CCE) if, for every player $i \in \mathcal{N}$ and every fixed unilateral deviation $a'_i \in A_i$, we have that
    \[
    \mathbb{E}_{(a_i,a_{-i}) \sim \sigma} [u_i(a_i,a_{-i})] \ge \mathbb{E}_{a_{-i} \sim \sigma_{-i}} [u_i(a'_i, a_{-i})]
    \]
\end{definition}

Because every CCE can be induced by no-external-regret learning dynamics, demonstrating that regret minimization permits algorithmic collusion reduces to exhibiting a single CCE supported on supra-competitive prices. We will exhibit this for the simplest case of a Bertrand Market with Logit demand: no viable outside option ($\alpha = 0$), marginal cost $c = 0$ and price sensitivity $\mu = 1$. For these parameters the competitive Nash equilibrium is $p^* = 2$, while the theoretical monopoly price diverges to infinity.

{By establishing the existence of a collusive CCE in this normalized setting, the following result formally grounds the claim of \cref{thm:collusive_cce}}.

\begin{customthm}{{1$'$}} \label{thm:collusive_cce_formal}
    Consider the normalized Bertrand Logit game ($\alpha = 0$, $c = 0$, $\mu = 1$).
    For any target
    price {$V > p^*$, where $p^*$ is the competitive Duopoly equilibrium price (of the continuous Bertrand game)}, there exists a Coarse Correlated Equilibrium (CCE) that is supported on
    strategies where both agents always bid 
    a price $p_i \ge V$.
\end{customthm}

To establish this result, we construct a symmetric CCE supported on exactly two discrete prices: a base price $p_L = V$ and an extreme bonus price $p_H \gg p_L$. Specifically, we define the joint strategy distribution $\sigma_{CCE}$ as:
\[
\sigma_{CCE} = \begin{cases}
(p_L, p_L) & \text{w.p.} \quad q \\
(p_H, p_H) & \text{w.p.} \quad 1-q
\end{cases}
\]
where the mixing probability $q = \frac{p_H}{p_L + p_H}$ exactly balances the expected payoff contributions from the two states. 

The formal proof is deferred to \cref{app:sec3}, where we precisely define $p_L$ and $p_H$ and prove that no unilateral deviation is profitable. This establishes $\sigma_{CCE}$ as a valid Coarse Correlated Equilibrium, demonstrating that standard no-external-regret algorithms can sustain arbitrarily high collusive prices.

\section{Convergence to Iteratively Undominated Strategies}

We analyze the interaction of $\mathcal{N}$ Domination-Avoiding (DA) agents through the infinite repetition of a finite stage game $\Gamma = (\mathcal{N}, \{A_i\}, \{u_i\})$. At each step $t \in \mathbb{N}$, players simultaneously select actions $a_i^t \in A_i$. We denote the marginal probability that player $i$ plays action $a_i$ at time $t$ as $p_i^t(a_i) = \Pr(a_i^t = a_i)$. This probability integrates both the agents' internal randomization and the evolving history of play.

To formalize the asymptotic behavior of these agents, we first define the set of actions that a learning agent continues to play with a strictly positive frequency over the infinite horizon.

\begin{definition}[Surviving Actions]
    An action $a_i \in A_i$ is said to survive if its upper-time-average probability of being played is strictly positive: $\operatorname{UTA}(\{p_i^t(a_i)\}) > 0$.
\end{definition}

Let $S_i \subseteq A_i$ denote the set of surviving actions for player $i$, and let $S_{-i} = \prod_{j \neq i} S_j$ denote the set of surviving joint profiles for the opponents of $i$.
To establish convergence, we first show that surviving actions must be justifiable. Specifically, the following lemma demonstrates that an action cannot survive if it is strictly dominated against the surviving actions of the opponents.

\begin{lemma}\label{lem:surviving_undominated_N}
    Suppose all $\mathcal{N}$ players use DA agents to repeatedly play the finite normal-form stage game $\Gamma$. For any player $i$, any surviving action $a_i \in S_i$, and every pure action $a'_i \in A_i$, there exists a surviving opponent profile $a_{-i} \in S_{-i}$ such that $u_i(a_i, a_{-i}) \ge u_i(a'_i, a_{-i})$.
\end{lemma}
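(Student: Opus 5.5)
Suppose, for contradiction, that for some surviving $a_i\in S_i$ there is a pure $a'_i\in A_i$ with $u_i(a'_i,a_{-i})>u_i(a_i,a_{-i})$ for every $a_{-i}\in S_{-i}$. The plan is to show that $a_i$ must then be $\delta$-empirically-dominated by $a'_i$ at almost all times. The DA property in \cref{def:alt_da} would then force $\operatorname{UTA}(\{p_i^t(a_i)\})=0$, contradicting survival.

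\textbf{Step 1 (non-surviving profiles are rare).} Every profile $a_{-i}\notin S_{-i}$ contains some coordinate $a_j\notin S_j$, and $\operatorname{UTA}(\{p_j^t(a_j)\})=0$. Summing over the finitely many such coordinates, the expected fraction of times $s\le t$ at which the opponents' realized profile lies outside $S_{-i}$, namely $\frac1t\sum_{s\le t}\Pr[a_{-i}^s\notin S_{-i}]$, tends to $0$ as $t\to\infty$. At every time $s$ with $a_{-i}^s\in S_{-i}$ we have $r^s(a_i)=u_i(a_i,a_{-i}^s)<u_i(a'_i,a_{-i}^s)=r^s(a'_i)$, because domination is assumed on $S_{-i}$. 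Hence the count $N_t=|\{s\le t: r^s(a_i)\ge r^s(a'_i)\}|$ is at most the number of bad times $B_t$, and $\mathbb{E}[B_t]/t\to 0$.

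\textbf{Step 2 (from expectation to domination of the played action).} Fix $\epsilon>0$ and let $\delta$ be the constant given by DA for player $i$, treating the other agents as the adversary. Since $a_i$ survives, write $\eta=\operatorname{UTA}(\{p_i^t(a_i)\})>0$. We have
\[
\Pr[a_i^t=a_i]\le \Pr[a_i^t=a_i,\ B_t\le\delta t]+\Pr[B_t>\delta t],
\]
and by Markov's inequality $\Pr[B_t>\delta t]\le \mathbb{E}[B_t]/(\delta t)\to0$. On the event $B_t\le \delta t$, the action $a_i$ is $\delta$-dominated by $a'_i$ up to time $t$, so the first term is at most $d^t(\delta)$. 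Taking upper time averages gives $\eta\le \operatorname{UTA}(\{d^t(\delta)\})+0\le\epsilon$. Choosing $\epsilon<\eta$ yields the contradiction.

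\textbf{Main obstacle.} The delicate point is Step 1's passage from the UTA of non-surviving marginals to a per-time statement. UTA is a $\limsup$ of Cesàro averages, so $\operatorname{UTA}(\{p_j^t(a_j)\})=0$ gives $\frac1t\sum_{s\le t}p_j^s(a_j)\to0$ along all $t$. This is exactly the statement $\mathbb{E}[B_t]/t\to0$ that Markov's inequality needs. A second subtlety is the ordering of quantifiers: $\delta$ may depend on $\epsilon$ and on the adversary, but the Markov term vanishes for every fixed $\delta>0$, so the order causes no problem. We must also make sure the DA guarantee applies with the realized opponents as the adversary, including its randomness; \cref{def:da} allows adversarial randomization, so it does.
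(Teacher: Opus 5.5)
Your proposal is correct and follows essentially the same route as the paper. It uses a union bound with the zero UTA of non-surviving marginals to show opponents rarely leave $S_{-i}$, then Markov's inequality to make $a_i$ $\delta$-dominated by $a'_i$ with probability tending to one, and finally the bound $p_i^t(a_i)\le d^t(\delta)+\Pr[B_t>\delta t]$ to contradict DA once $\epsilon<\operatorname{UTA}(\{p_i^t(a_i)\})$. The only difference is cosmetic: you fix $\epsilon$ first and take $\delta$ from the DA definition, whereas the paper shows $\operatorname{UTA}(\{d_i^t(\delta)\})\ge c$ for every $\delta>0$ and then invokes DA with $\epsilon=c/2$.
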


The formal proof is deferred to \cref{app:sec4}. The argument proceeds by contradiction: if an alternative action $a'_i$ dominated $a_i$ against all surviving opponent profiles, $a_i$ could only yield a higher payoff when opponents play non-surviving actions. Because non-surviving actions are played with an asymptotic time-average of zero, $a_i$ becomes $\delta$-empirically dominated by its dominating strategy $a'_i$. A DA agent must therefore play $a_i$ with vanishing probability, contradicting the assumption that $a_i$ survives.

Lemma \ref{lem:surviving_undominated_N} establishes a strict local constraint: an action cannot survive unless it is justifiable against the surviving profiles of the opponents. By recursively applying this requirement, we arrive at our main global convergence result.

\begin{customthm}{2}
    Consider any finite $\mathcal{N}$-player game and any $\mathcal{N}$ Domination-Avoiding agents that repeatedly play this game against each other.
    For every action $a_i$ that is removable by Iterated Elimination of Strictly Purely Dominated Strategies we have that 
    \[
    \lim_{T\rightarrow \infty} \Pr_{t \sim {U}(1 \dots T)} [\text{player } i \text{ plays } a_i \text{ at time } t]=0
    \]
    where the probability is taken over a  {uniform} random choice of a time $t$ in the first $T$ time steps as well as the internal randomization of the agents.
\end{customthm}

The formal proof is deferred to \cref{app:sec4}. The argument proceeds by cascading elimination. DA agents asymptotically abandon empirically dominated actions. As players cease playing these initial actions (the first round of IESPDS), the effective game shrinks. Actions justifiable only against abandoned strategies become empirically dominated, causing agents to discard them next. This cascade recursively mirrors the IESPDS procedure, ensuring any iteratively eliminated strategy is played with asymptotic probability zero.

\paragraph{Application to Bertrand Logit Pricing}
To apply this general convergence guarantee to our economic setting, we must characterize the IESPDS set of our discrete Logit pricing game over the $\epsilon$-dense grid $\mathcal{P}$. As the following theorem establishes, this iterative elimination process systematically removes non-competitive regions of the action space from both the top down and the bottom up.

\begin{theorem} \label{thm:IESPDS_collapse}
    Consider a symmetric Bertrand Logit pricing game over a finite $\epsilon$-dense grid $\mathcal{P}$ of possible price levels. Let $p^*$ be the Nash equilibrium of the continuous game, and assume $\min(\mathcal{P}) < p^* < \max(\mathcal{P})$.
    Then only prices $p \in \mathcal{P}$ satisfying $|p-p^*| \le 2 \epsilon$ can survive Iterated Elimination of Strictly Purely Dominated Strategies (IESPDS).
\end{theorem}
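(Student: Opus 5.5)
The plan is to trap the surviving prices in an interval $[L_k,H_k]$ that shrinks toward $p^*$ as the IESPDS rounds proceed. I would use two structural facts about the continuous stage payoff $u(p,q)=(p-c)Q(p,q)$.

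\emph{(i) Single-peakedness.} For every fixed opponent price $q$, $u(\cdot,q)$ is strictly increasing below a unique best response $BR(q)$ and strictly decreasing above it. I would prove this from the first-order condition $\mu(p-c)(1-Q)=1$, whose left side is strictly increasing in $p$.

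\emph{(ii) Monotonicity and slope of $BR$.} I would implicitly differentiate the FOC, using $\partial Q/\partial p=-\mu Q(1-Q)$. For $\alpha=0$ this gives
\[
BR'(q)=\frac{xQ}{1+xQ}, \qquad x=\mu(p-c),
\]
and substituting $x=1/(1-Q)$ gives $BR'(q)=Q(BR(q),q)$. The same computation with $\alpha>0$ should give a bound of the form $0<BR'\le Q(BR(q),q)$. Hence $BR$ is increasing with slope less than $1$, and $p^*$ is its unique fixed point.

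\emph{Elimination step.} Suppose every price of either player that survives round $k$ lies in $[L_k,H_k]$. Take any grid price $p\le BR(L_k)-\epsilon$. By $\epsilon$-density there is a grid point $p'\in(p,\,BR(L_k)]$. For every surviving opponent price $q\ge L_k$ we have $BR(q)\ge BR(L_k)\ge p'>p$, so by (i) $u(p',q)>u(p,q)$. Thus $p$ is strictly purely dominated at round $k+1$. Symmetrically, every grid price $p\ge BR(H_k)+\epsilon$ is dominated by a grid point in $[BR(H_k),p)$. So one round of IESPDS gives
\[
L_{k+1}\ge BR(L_k)-\epsilon, \qquad H_{k+1}\le BR(H_k)+\epsilon .
\]
The assumption $\min\mathcal P<p^*<\max\mathcal P$ ensures these dominating grid points exist. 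Since the game is finite, the process stabilizes at some $L\le p^*\le H$ with
\[
L\ge BR(L)-\epsilon, \qquad H\le BR(H)+\epsilon .
\]

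\emph{Lower side.} For $x\in[L,p^*]$ we have $BR(x)\ge x$, because $BR-\mathrm{id}$ is decreasing and vanishes at $p^*$. So the player best-responding charges at least the opponent's price, which gives $Q\le 1/2$, and strictly less if $\alpha>0$. Hence $BR'\le 1/2$ on this range, so $p^*-BR(L)\le (p^*-L)/2$. Combining with $L\ge BR(L)-\epsilon$ yields
\[
p^*-L\le \tfrac12(p^*-L)+\epsilon, \qquad\text{i.e.}\qquad p^*-L\le 2\epsilon .
\]

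\emph{Upper side (main obstacle).} For $x>p^*$ the best-responder undercuts, so $Q(BR(x),x)>1/2$ and the naive contraction factor exceeds $1/2$. The one-step analogue of the lower-side argument then only gives $H-p^*\le\epsilon/(1-s)$ with $s>1/2$.

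My first attempt would be to avoid a pointwise slope bound and instead integrate along the best-response curve. I would use $BR'(x)=Q$ together with the FOC identity $Q=1-1/(\mu(BR(x)-c))$, valid for $\alpha=0$ and with an analogous expression for $\alpha>0$, to show directly that
\[
BR(x)-p^*\le \tfrac12(x-p^*) \qquad\text{for all } x\ge p^* .
\]
Equivalently, this says $x\mapsto 2BR(x)-x$ is nonincreasing for $x\ge p^*$. I would verify it by writing $BR(x)$ explicitly through the Lambert-type relation $\mu(p-c)-1=e^{\mu(p-x)}$ (the $\alpha=0$ FOC) and checking the inequality analytically.

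If this chord bound fails, the fallback is to use the already-established lower bound $L\ge p^*-2\epsilon$ on the opponent's price when eliminating high prices. A price $p$ is above $BR(q)$ for every surviving $q\le H$, and the elimination condition can then be sharpened by comparing $p$ with a grid point near $BR(H)$ using the decrease of $u(\cdot,q)$. The aim is to squeeze the fixed-point inequality so that it again yields $H-p^*\le 2\epsilon$.
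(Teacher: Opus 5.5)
Your lower-side argument is sound and differs from the paper's. The upper side, however, is a genuine gap, and the route you propose for it cannot work.

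First, the chord bound is false. For $\alpha=0$ you yourself get $BR'(x)=Q(BR(x),x)=1-\frac{1}{\mu(BR(x)-c)}$. Since $BR(x)>p^*=c+2/\mu$ for $x>p^*$, this gives $BR'>\tfrac12$ there. So $2BR(x)-x$ is strictly \emph{increasing} on $x>p^*$, and $BR(x)-p^*>\tfrac12(x-p^*)$.

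Second, and more fundamentally, no finer analysis of $BR$ can rescue the plan, because your stabilization inequality $H<BR(H)+\epsilon$ is itself too weak. For $\alpha=0$, the map $H\mapsto H-BR(H)$ is strictly increasing, with derivative $1/(\mu(BR(H)-c))$. The FOC reads $\mu(p-c)-1=e^{\mu(H-p)}$; your Lambert relation has the sign of the exponent flipped. It shows that $H-BR(H)=\epsilon$ exactly when $BR(H)=c+(1+e^{\mu\epsilon})/\mu$. Hence $H<BR(H)+\epsilon$ is equivalent to $H-p^*<\epsilon+\frac{e^{\mu\epsilon}-1}{\mu}$, which is strictly larger than $2\epsilon$ for every $\epsilon>0$. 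For example, $\mu=1$, $c=0$, $\epsilon=1$ allows any $H<2+e\approx 4.72$, while $p^*+2\epsilon=4$.

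The fallback does not address this. When eliminating a high price, the binding opponent price is the top one, $q=H$, not $L$. So the bound $L\ge p^*-2\epsilon$ is irrelevant there.

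What costs you is requiring the dominating grid point to lie in $[BR(H),H)$; a grid point \emph{below} $BR(H)$ can still beat $H$. The paper instead compares the top surviving price $p^{(j)}$ directly with its grid neighbour $p^{(j)}-\Delta$, where $\Delta\le\epsilon$.
\begin{itemize}
\item Set $Z=e^{\mu(p^{(j)}-q)}+\alpha e^{\mu p^{(j)}}$. Then $u(p^{(j)}-\Delta,q)>u(p^{(j)},q)$ is equivalent to $Z\,[(p^{(j)}-c)(1-e^{-\mu\Delta})-\Delta]>\Delta$.
\item Since $Z$ decreases in $q$, the worst case over $q\le p^{(j)}$ is the matching price $q=p^{(j)}$. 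This gives the explicit condition $p^{(j)}>c+\frac{\Delta}{1-e^{-\mu\Delta}}\bigl[1+\frac{1}{1+\alpha e^{\mu p^{(j)}}}\bigr]$.
\item That condition is implied by $p^{(j)}>p^+$ (the fixed point of $\Phi^+$), because $x\mapsto x/(1-e^{-\mu x})$ is increasing.
\item The mirror-image up-pricing comparison gives $p^-$. Then $p^-<p^*<p^+\le p^-+2\epsilon$ follows from $\frac{\epsilon}{1-e^{-\mu\epsilon}}-\frac{\epsilon}{e^{\mu\epsilon}-1}=\epsilon$ and the demand bracket being at most $2$.
\end{itemize}
No best-response map is needed. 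In the example above this certifies $H\le p^+=2/(1-e^{-1})\approx 3.16$, whereas your inequality only certifies $H<2+e$. You can keep your contraction argument for the lower side, but the upper side needs this diagonal, adjacent-grid-point payoff comparison, or something equally sharp.
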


The full proof appears in \cref{app:sec4}. Combining this structural characterization with the asymptotic behavioral guarantee of Theorem \ref{thm:da_IESPDS} yields our main economic result: Domination-Avoiding agents converge to competitive pricing and fail to sustain collusion.

\begin{customcorollary}{3}
    Consider two Domination-Avoiding agents that repeatedly play a symmetric Bertrand duopoly game with Logit demand over any finite grid of possible prices {$p^{(1)}<p^{(2)}< \dots <p^{(M)}$, that is $\epsilon$-dense, $p^{(j)}-p^{(j-1)} \le \epsilon$, and whose range contains the equilibrium,  $p^{(1)} \le p^* \le p^{(M)}$} where $p^*$ is the competitive Duopoly equilibrium price (of the continuous Bertrand game). Then
    \[
    \lim_{T \rightarrow \infty} \Pr_{t \sim {U}(1 \dots T)}[ p^*-2\epsilon \le p_i^t \le p^*+2\epsilon] = 1
    \] 
    where $p_i^t$ is the price played by player $i$ at time $t$ and where the probability is taken over a uniformly random time $t$ in the first $T$ steps as well as over the internal randomization of the agents.
\end{customcorollary}

The formal proof is deferred to \cref{app:sec4}. The idea is as follows: Recall from \cref{thm:IESPDS_collapse} that applying IESPDS to the discrete Bertrand Logit game eliminates all prices outside a $2\epsilon$-neighborhood of the competitive Nash equilibrium $p^*$. By \cref{thm:da_IESPDS}, DA agents asymptotically cease playing any strategy eliminated by IESPDS. Because the discrete grid is finite, the agents' probability mass must concentrate entirely on the surviving competitive window.

\section{Standard Algorithms are Domination-Avoiding}

The Domination-Avoiding (DA) framework encompasses a broad spectrum of standard learning dynamics in normal-form games. This class includes all Mean-Based algorithms across both finite (using the Doubling Trick) and infinite horizons. It also captures the classic infinite-horizon variant of Multiplicative Weights operating under standard decaying learning rates, which is not Mean-Based. Furthermore, the DA property establishes a theoretical separation in the collusive potential of regret minimizers: while no-external-regret guarantees permit supra-competitive pricing (\cref{thm:collusive_cce}), any algorithm guaranteeing no-internal-regret satisfies the DA condition. Finally, the DA property is closed under finite contexts. Consequently, agents that condition their pricing on market histories are also DA. The formal definitions of these algorithmic families, together with the complete proofs establishing their Domination-Avoiding property, are deferred to \cref{app:sec5}.

\section{Conclusion and Future Work}

In this paper, we demonstrated that not all learning algorithms are prone to spontaneous algorithmic collusion. By formalizing the class of Domination-Avoiding (DA) agents, we proved that algorithms which consistently avoid empirically dominated strategies fundamentally fail to sustain collusive outcomes in settings such as a Bertrand duopoly. Driven by the iterative elimination of strictly dominated high-price actions, DA agents inevitably converge to the competitive Nash equilibrium.

While our findings characterize the conditions under which independent learning agents \emph{fail} to collude, this naturally raises an important inverse question for future research: what specific algorithmic properties and environmental conditions are required to \emph{enable} and sustain collusion? In the context of economic pricing, algorithmic collusion is heavily scrutinized as it harms consumer welfare. However, in broader multi-agent settings—such as autonomous navigation, network routing, and decentralized resource allocation—the ability for independent agents to spontaneously cooperate is highly desirable. Future work will aim to identify the mechanisms necessary for algorithms to learn cooperative policies, helping us design agents that can successfully coordinate when beneficial, while understanding how to prevent them from doing so when it is harmful.

\begin{ack}
This research was supported by a grant from the Israeli Science Foundation (ISF number 505/23).
\end{ack}

\bibliographystyle{unsrt}
\bibliography{bib}

\appendix

\crefalias{section}{appendix}

\section{Classes of Domination-Avoiding Algorithms}
\label{app:sec5}

We establish the DA property for four broad classes of algorithms: Mean-Based dynamics (including finite-horizon variants via the Doubling Trick), the standard infinite-horizon Multiplicative Weights algorithm (with decaying learning rate), No-Swap-Regret agents, and Contextual Domination-Avoiding algorithms.
Throughout this section, we analyze these dynamics from the perspective of a single agent facing an arbitrary environment. Dropping the player subscript, let $A$ denote the agent's finite action space and $r^t(a) = u(a, a_{-i}^t)$ denote the instantaneous reward of playing $a$ at time $t$. For notational convenience, we normalize the reward space to $r^t(a) \in [0,1]$, noting that all bounds scale proportionally for any bounded utility interval, and denote by $c=\min_{a,a' \in A, t, \text{ such that } 
r^t(a)\ne r^t(a')} |r^t(a)-r^t(a')|$ the minimum gap of non-equal utilities in the game.  Since the game is finite we have that $c>0$ is defined.  
More abstractly, when just talking about a single agent playing against an adversary, when such $c>0$ exists we say that the environment satisfies the minimum reward gap condition.

Throughout this appendix, we analyze the learning dynamics using the Upper-Time-Average (UTA) formulation of Domination-Avoiding agents (\cref{def:alt_da}), as its structure simplifies the technical mathematics.

\subsection{Mean-Based Learners}

We begin with Mean-Based (MB) learning algorithms. A MB agent is characterized by a strict behavioral constraint: it selects an action with vanishing probability if its cumulative utility lags the optimal in hindsight action. 

This framework is formalized under two standard paradigms. The original formulation \cite{BMS18} assumes a known, finite time horizon, while subsequent work \cite{F+21, D+22, BDO24} extended this to infinite-horizon dynamics. We establish that both variants satisfy the Domination-Avoiding property.

This connection resolves the DA status of several canonical learning algorithms. In Fictitious Play, an agent best-responds to the historical empirical distribution of its opponents; because the empirical utility of a historically dominated action trails the optimal response, it satisfies the infinite-horizon MB definition. Conversely, standard no-regret algorithms including Follow the Perturbed Leader (FTPL), EXP3, and Multiplicative Weights satisfy the finite-horizon MB definition \cite{BMS18}. As we show below, extending these to the infinite horizon via the standard Doubling Trick preserves the Domination-Avoiding guarantee.

\paragraph{Infinite-Horizon Mean-Based Algorithms}

The infinite-horizon formulation evaluates behavior as time progresses to infinity, serving as a direct conceptual analogue to our Domination-Avoiding framework. The definition penalizes historically suboptimal actions: if an action trails the empirical leader by a linearly growing margin, its selection probability must decay to zero.

\begin{definition}[Mean-Based Learning -- Infinite Horizon \cite{F+21, D+22, BDO24}]
    Let $\sigma^t(a) = \sum_{s=1}^t r^s(a)$ denote the cumulative reward of action $a \in A$ up to round $t$. A learning algorithm is Mean-Based in the infinite-horizon regime if there exists a monotonically decreasing function $\gamma: \mathbb{N} \to \mathbb{R}^+$ with $\gamma(t) = o(1)$, such that for all $t$ and any pair of actions $a, a' \in A$, whenever:
    \[
    \sigma^t(a') - \sigma^t(a) > \gamma(t) \cdot t
    \]
    the probability of playing action $a$ at time $t$ satisfies $\Pr(a^t = a) \le \gamma(t)$.
\end{definition}

\begin{customthm}{4.1} \label{thm:mb_infinite}
    Any infinite-horizon Mean-Based learning algorithm is Domination-Avoiding (DA) against any environment satisfying the minimum reward gap condition.
\end{customthm}

\begin{proof}
Fix an arbitrary $\epsilon > 0$. We will show there exists $\delta > 0$ such that $\operatorname{UTA}(\{d^t(\delta)\}) = 0 < \epsilon$.

Suppose an action $a \in A$ is $\delta$-empirically dominated by $a' \in A$ up to time $t$. We bound the cumulative reward gap 
\[
\Delta^t = \sigma^t(a') - \sigma^t(a)=\sum_{s=1}^t \left( r^s(a')-r^s(a) \right)
\] 
By the definition of $\delta$-domination, $r^s(a) \ge r^s(a')$ in at most $\delta t$ rounds. For the remaining $(1-\delta)t$ rounds, the strict dominance $r^s(a') > r^s(a)$ holds.

Because rewards are normalized to $[0, 1]$, the instantaneous reward difference $r^s(a') - r^s(a)$ in the exception rounds is bounded below by $-1$. By the minimum reward gap condition, this difference in the remaining rounds is at least $c > 0$. Bounding the cumulative gap yields:
\[
\Delta^t \ge c(1-\delta)t - \delta t = t \big(c - \delta(c+1)\big)
\]
Setting $\delta = \frac{c}{2(c+1)}$ gives:
\[
\Delta^t \ge \frac{c}{2} t
\]

By the definition of an infinite-horizon Mean-based algorithm, the probability of playing $a$ is bounded by $\gamma(t)$ whenever $\Delta^t > \gamma(t) \cdot t$. Substituting our lower bound, this condition is satisfied whenever $\frac{c}{2} > \gamma(t)$. 

Since $\gamma(t) = o(1)$, there exists a finite time $T^*$ such that $\gamma(t) < \frac{c}{2}$ for all $t > T^*$. Thus, for $t > T^*$, the marginal probability of playing any specific $\delta$-dominated action is at most $\gamma(t)$. 

Recall that $d^t(\delta)$ denotes the overall probability that the agent plays any $\delta$-dominated action at time $t$. Applying a union bound over the finite action space $A$, we obtain $d^t(\delta) \le |A|\gamma(t)$ for all $t > T^*$. For early rounds ($t \le T^*$), we use the trivial bound $d^t(\delta) \le 1$.

We evaluate the Upper-Time-Average (UTA):
\[
\operatorname{UTA}(\{d^t(\delta)\}) = \limsup_{T \to \infty} \frac{1}{T} \sum_{t=1}^T d^t(\delta) \le \limsup_{T \to \infty} \frac{1}{T} \left( T^* + |A| \sum_{t=T^*+1}^{T} \gamma(t) \right)
\]
Because $T^*$ is finite, the first term vanishes. Since $\lim_{t \to \infty} \gamma(t) = 0$, Cauchy's Limit Theorem ensures that the sequence of its arithmetic means also converges to $0$. As $|A|$ is constant, the limit superior evaluates to $0$. This satisfies the Domination-Avoiding condition.
\end{proof}

\paragraph{Finite-Horizon Mean-Based Algorithms and the Doubling Trick}

Many standard learning algorithms are analyzed over a fixed, finite time horizon $T$. To deploy such algorithms in an infinite-horizon setting without knowing $T$ a priori, a standard technique is the ``Doubling Trick.''

\begin{definition}[Mean-Based Learning -- Finite Horizon \cite{BMS18}]
    Let $\mathcal{A} = \{\mathcal{A}_T\}_{T=1}^\infty$ be a family of learning algorithms parameterized by a known time horizon $T$. Let $\sigma^t(a) = \sum_{s=1}^t r^s(a)$ denote the cumulative reward of action $a \in A$ up through round $t \le T$. The family $\mathcal{A}$ is Mean-Based if there exists a function $\gamma: \mathbb{N} \to \mathbb{R}^+$ with $\gamma(T) = o(1)$ as $T \to \infty$, such that for any horizon $T$, any time step $t \le T$, and any pair of actions $a, a' \in A$, whenever:
    \[
    \sigma^t(a') - \sigma^t(a) > \gamma(T) \cdot T
    \]
    the probability of algorithm $\mathcal{A}_T$ playing action $a$ at time $t$ satisfies $\Pr(a^{t} = a) \le \gamma(T)$.
\end{definition}

\begin{customthm}{4.2} \label{thm:mb_finite_doubling}
    Let $\mathcal{A}$ be a finite-horizon Mean-Based learning algorithm. Consider an agent that runs $\mathcal{A}$ in consecutive epochs $n = 0, 1, 2, \dots$ of length $T_n = 2^n T_0$ (for some base constant $T_0 > 0$), completely resetting its internal memory at the beginning of each epoch. This agent is Domination-Avoiding against any environment satisfying the minimum reward gap condition.
\end{customthm}

\begin{proof}
Fix an arbitrary $\epsilon > 0$. We will show there exists $\delta > 0$ such that $\operatorname{UTA}(\{d^t(\delta)\}) \le \epsilon$.

The algorithm runs in epochs of length $T_n = 2^n T_0$. The absolute time $t_n$ at the start of epoch $n$ is $\sum_{k=0}^{n-1} 2^k T_0 = (2^n - 1)T_0$. Notice that $t_n < T_n$. Let $t$ be the current absolute time step, falling within epoch $n$ at relative time $\tau \in [1, T_n]$. Thus, $t = t_n + \tau < 2T_n$.

Suppose that up to time $t$, an action $a \in A$ is globally $\delta$-empirically dominated by $a' \in A$. By definition, the total number of exception rounds where $r^s(a) \ge r^s(a')$ is at most $\delta t$. Because $t < 2T_n$, the absolute number of exceptions up to $t$ is strictly bounded by $2\delta T_n$.

Because the internal memory resets at the beginning of epoch $n$, the algorithm evaluates the cumulative gap accumulated exclusively from the epoch start:
\[
\Delta_n^\tau = \sum_{s=t_n+1}^{t_n+\tau} \big( r^s(a') - r^s(a) \big)
\]
In the worst case, all $2\delta T_n$ global exceptions occur within this active memory. 

We partition these $\tau$ rounds based on the instantaneous difference $r^s(a') - r^s(a)$. In the exception rounds ($r^s(a) \ge r^s(a')$), the normalized $[0,1]$ rewards bound this difference below by $-1$. In the remaining dominating rounds ($r^s(a') > r^s(a)$), the minimum reward gap condition bounds this difference below by $c > 0$. Thus, the internal cumulative gap at relative time $\tau$ satisfies:
\[
\Delta_n^\tau \ge c(\tau - 2\delta T_n) - (1)(2\delta T_n) = c\tau - 2\delta(c+1)T_n
\]

By the definition of a finite-horizon MB algorithm, the probability of playing action $a$ is bounded by $\gamma(T_n)$ whenever $\Delta_n^\tau > \gamma(T_n) \cdot T_n$. Substituting our lower bound, this condition triggers when:
\[
\tau > T_n \left( \frac{2\delta(c+1) + \gamma(T_n)}{c} \right)
\]
Let $\alpha_n = \frac{2\delta(c+1) + \gamma(T_n)}{c}$ denote the maximum fraction of epoch $n$ spent in this initial unpenalized phase for a single action. During this phase ($\tau \le \alpha_n T_n$), the marginal probability of playing the dominated action is bounded by $1$. Afterward ($\tau > \alpha_n T_n$), it is bounded by $\gamma(T_n)$. Applying a union bound over all actions in $A$, the expected marginal error $E_n$ (the probability of playing any dominated action) averaged over epoch $n$ satisfies:
\[
E_n \le |A|\alpha_n + |A|\gamma(T_n)
\]
Since $\lim_{n \to \infty} \gamma(T_n) = 0$, taking the limit superior of the epoch error bound yields:
\[
\limsup_{n \to \infty} E_n \le |A| \lim_{n \to \infty} \alpha_n = \frac{2\delta|A|(c+1)}{c}
\]
We choose $\delta = \frac{c\epsilon}{8|A|(c+1)} > 0$, which guarantees $\limsup_{n \to \infty} E_n \le \frac{\epsilon}{4}$.

Finally, we evaluate the Upper-Time-Average over absolute time $t \to \infty$. For $t$ inside epoch $N$, the time-averaged error is bounded by the total accumulated errors over all past epochs plus the worst-case accumulation in the current epoch. Since $t \ge t_N$, we can bound the denominator:
\[
\frac{1}{t} \sum_{s=1}^{t} d^s(\delta) \le \frac{\sum_{n=0}^{N-1} E_n T_n + E_N T_N}{t_N} = \frac{\sum_{n=0}^{N-1} E_n T_n}{\sum_{n=0}^{N-1} T_n} + E_N \frac{T_N}{t_N}
\]
We evaluate the limit superior as $N \to \infty$. The first fraction is a weighted average of the sequence $E_n$; because the total elapsed time diverges ($\sum T_n \to \infty$), the limit superior of this average is bounded by the limit superior of the sequence itself ($\limsup E_n \le \frac{\epsilon}{4}$). 

For the second term, the ratio $\frac{T_N}{t_N} = \frac{2^N}{2^N - 1}$ converges to $1$, yielding an asymptotic bound of $\limsup E_N \cdot 1 \le \frac{\epsilon}{4}$. Thus:
\[
\operatorname{UTA}(\{d^t(\delta)\}) \le \frac{\epsilon}{4} + \frac{\epsilon}{4} = \frac{\epsilon}{2} < \epsilon
\]
This satisfies the Domination-Avoiding condition.
\end{proof}

\subsection{Multiplicative Weights with Variable Learning Rate}

The Multiplicative Weights (MW) algorithm (\cref{alg:mw}) is a cornerstone of online learning. While MW with a constant learning rate is Mean-Based and regret-minimizing in finite-horizon settings, it fails to achieve vanishing regret over an infinite horizon. To deploy MW over an unknown, infinite horizon, standard implementations employ a decaying learning rate (typically $\eta^t \propto 1/\sqrt{t}$) to secure optimal no-regret bounds. Because this decaying-rate variant violates the infinite-horizon Mean-Based condition, its status as a Domination-Avoiding agent is not immediate. We establish that it remains Domination-Avoiding.

\IncMargin{1em}
\begin{algorithm}[ht]
\caption{Multiplicative Weights}
\label{alg:mw}
\SetKwInOut{Require}{Require}
\Require{Finite action space $A$, decaying learning rate schedule $\{\eta^t\}_{t \ge 1}$ with $0 < \eta^t < 1$}
\textbf{Initialize:} $w^1(a) = 1$ for all actions $a \in A$\;
\For{$t = 1, 2, \dots$}{
    Choose action $a \in A$ with probability $p^t(a) = w^t(a) \big/ \sum_{a' \in A} w^t(a')$\;
    Observe the reward $r^t(a) \in [0, 1]$ for all $a \in A$\;
    Update the weight for all $a \in A$:
    \[
        w^{t+1}(a) = w^t(a) \cdot \exp\left(\eta^t r^t(a)\right)
    \]
}
\end{algorithm}
\DecMargin{1em}

\begin{customthm}{4.3} \label{thm:mw_decaying}
    The Multiplicative Weights algorithm with a decaying learning rate $\eta^t = \lambda{t^{-\beta}}$, for any constants $\lambda > 0$ and $\beta \in (0, 1)$, is Domination-Avoiding against any adversary satisfying the minimum reward gap condition.
\end{customthm}

\begin{proof}
Fix an arbitrary $\epsilon > 0$. We will show there exists $\delta > 0$ such that $\operatorname{UTA}(\{d^t(\delta)\}) = 0 < \epsilon$.

Fix an arbitrary time horizon $T > 1$. Suppose that up to time $T$, an action $a \in A$ is $\delta$-empirically dominated by an alternative action $a' \in A$. Unfolding the recursive definition of the algorithm, the probability of playing action $a$ at time $T$ is bounded by its weight ratio relative to the dominant action:
\[
p^T(a) = \frac{w^T(a)}{\sum_{b \in A} w^T(b)} \le \frac{w^T(a)}{w^T(a')} = \exp\left(-\sum_{t=1}^{T-1} \eta^t \big(r^t(a') - r^t(a)\big)\right)
\]

Let $\Delta^t = r^t(a') - r^t(a)$. We partition the history $\{1, \dots, T-1\}$ into exception rounds ($r^t(a) \ge r^t(a')$) and dominating rounds ($r^t(a') > r^t(a)$). By the definition of $\delta$-domination, there are at most $\delta T$ exception rounds, yielding a worst-case gap of $\Delta^t \ge -1$ under $[0,1]$ normalized rewards. For the remaining dominating rounds, the minimum reward gap condition guarantees $\Delta^t \ge c$.

We lower-bound the cumulative exponent sum by attributing at least $c$ to every round, and subtracting the maximum possible penalty $(c+1)$ for the exception rounds:
\[
\sum_{t=1}^{T-1} \eta^t \Delta^t \ge c \sum_{t=1}^{T-1} \eta^t - (c+1) \sum_{t \in \text{exceptions}} \eta^t
\]

Substituting the learning rate $\eta^t = \lambda{t^{-\beta}}$, we establish the worst-case lower bound by maximizing this subtracted term. Because the learning rate is strictly decreasing, the subtracted term is maximized if all exception rounds occur immediately at the start of learning ($t=1, \dots, \lfloor \delta T \rfloor$):
\[
\sum_{t=1}^{T-1} \eta^t \Delta^t \ge \lambda \left( c \sum_{t=1}^{T-1} t^{-\beta} - (c+1) \sum_{t=1}^{\lfloor \delta T \rfloor} t^{-\beta} \right)
\]

Because the function $x^{-\beta}$ is strictly decreasing for $\beta \in (0, 1)$, we bound the term $t^{-\beta}$ by its integral over adjacent unit intervals: 
\[
\int_{t}^{t+1} x^{-\beta} \, dx < t^{-\beta} < \int_{t-1}^{t} x^{-\beta} \, dx
\]
Evaluating these integrals yields
\[
\frac{1}{1-\beta}\big((t+1)^{1-\beta} - t^{1-\beta}\big) < t^{-\beta} < \frac{1}{1-\beta}\big(t^{1-\beta} - (t-1)^{1-\beta}\big)
\]
Summing these strict bounds telescopes directly:
\begin{align*}
\sum_{t=1}^{T-1} t^{-\beta} &> \sum_{t=1}^{T-1} \frac{1}{1-\beta} \big((t+1)^{1-\beta} - t^{1-\beta}\big) = \frac{T^{1-\beta} - 1}{1-\beta} \\
\sum_{t=1}^{\lfloor \delta T \rfloor} t^{-\beta} &< \sum_{t=1}^{\lfloor \delta T \rfloor} \frac{1}{1-\beta} \big(t^{1-\beta} - (t-1)^{1-\beta}\big) = \frac{\lfloor \delta T \rfloor^{1-\beta}}{1-\beta} \le \frac{(\delta T)^{1-\beta}}{1-\beta}
\end{align*}
Substituting these bounds back into our inequality yields:
\begin{align*}
\sum_{t=1}^{T-1} \eta^t \Delta^t &> \lambda \left[ c \left( \frac{T^{1-\beta} - 1}{1-\beta} \right) - (c+1) \frac{\delta^{1-\beta} T^{1-\beta}}{1-\beta} \right] \\
&= \frac{\lambda}{1-\beta} T^{1-\beta} \big( c - (c+1)\delta^{1-\beta} \big) - \frac{\lambda c}{1-\beta}
\end{align*}

We select $\delta < \left(\frac{c}{c+1}\right)^{\frac{1}{1-\beta}}$, ensuring the term $c - (c+1)\delta^{1-\beta}$ is strictly positive.

Let $K_1 = \frac{\lambda}{1-\beta} \big(c - (c+1)\delta^{1-\beta}\big) > 0$ denote this fixed decay rate. Substituting the bounded sum back into our initial probability ratio, we separate the static leading constant from the time-decaying exponent:
\[
p^T(a) \le \exp\left( \frac{\lambda c}{1-\beta} \right) \exp\Big(-K_1 T^{1-\beta}\Big)
\]

Let $K_2 = \exp\left( \frac{\lambda c}{1-\beta} \right)$. Applying a union bound over the finite action space $A$, the overall marginal probability $d^T(\delta)$ of the agent playing any $\delta$-dominated action at time $T$ is bounded by:
\[
d^T(\delta) \le |A| K_2 \exp\Big(-K_1 T^{1-\beta}\Big)
\]

Finally, we evaluate the Upper-Time-Average. Because $\beta \in (0, 1)$, the exponent $1-\beta$ is positive, so $T^{1-\beta} \to \infty$ as $T \to \infty$. Because the decay rate $K_1$ is strictly positive, the exponential term decays to $0$. Consequently, the sequence $d^T(\delta)$ vanishes. Because the sequence converges to $0$, the limit superior of its time-average evaluates to $0$. Thus, $\operatorname{UTA}(\{d^t(\delta)\}) = 0 < \epsilon$, satisfying the Domination-Avoiding condition.
\end{proof}

\subsection{No-Swap Regret Minimizers}

\cref{thm:collusive_cce} established that no-external-regret agents can sustain arbitrarily high collusive prices in the Logit duopoly game. A strictly stronger theoretical benchmark is No-Swap-Regret (NSR), also known as internal-regret, which guarantees an agent does not regret playing any specific action over another in hindsight. It is well known that swap-regret-minimizers that play a game against each other converge to the set of correlated equilibria of the game - a subset of the coarse correlated equilibria \cite{CL06}. We prove that while external regret minimizers may fail the Domination-Avoiding condition, any algorithm guaranteeing No-Swap-Regret strictly satisfies it. 

\begin{definition}[No-Swap-Regret \cite{CL06}]
    At each time step $t \in \mathbb{N}$, the agent selects an action $a^t \in A$ and an adversary (or environment) reveals a reward function $r^t: A \to [0, 1]$. For any swap function $\phi: A \to A$, the swap regret at horizon $T$ is:
    \[
    R^\phi(T) = \sum_{t=1}^T \left( r^t(\phi(a^t)) - r^t(a^t) \right)
    \]
    The total swap regret is $R_{\text{swap}}(T) = \max_{\phi:A \rightarrow A} R^\phi(T)$. An algorithm guarantees No-Swap-Regret if, against any adversary, its expected time-averaged swap regret asymptotically converges to zero (or is non-positive):
    \[
    \limsup_{T \to \infty} \frac{1}{T} \mathbb{E}\left[ R_{\text{swap}}(T) \right] \le 0
    \]
\end{definition}

\begin{customthm}{4.4} \label{thm:nsr_da}
    Any learning algorithm that guarantees No-Swap-Regret is Domination-Avoiding against any environment satisfying the minimum reward gap condition.
\end{customthm}

\begin{proof}
We prove the contrapositive: if a learning agent is not Domination-Avoiding (DA), it cannot guarantee No-Swap-Regret (NSR).

Assume the agent is not DA. Then there exists an adversary and $\epsilon > 0$ such that for all $\delta > 0$, $\operatorname{UTA}(\{d^t(\delta)\}) > \epsilon$. 
Define $\epsilon' = \frac{\epsilon}{|A|^2}$. We fix a specific threshold $\delta = \frac{c\epsilon'}{2(c+1)}$. Because the DA violation holds for all $\delta > 0$, it holds for this chosen $\delta$.

Let $D^t$ be the event that action $a$ is $\delta$-empirically dominated by $a'$ up to time $t$. The limit superior of a finite sum is bounded by the sum of the limits superior. Applying a union bound over the $|A| \cdot (|A| - 1)$ distinct action pairs guarantees the existence of a fixed pair $(a, a')$ and an infinite sequence of horizons $\{T_k\}_{k=1}^\infty$ such that:
\[
\frac{1}{T_k} \sum_{t=1}^{T_k} \mathbb{P}\left(a^t = a \cap D^t\right) > \epsilon'
\]

Define the swap function:
\[
\phi(x) = \begin{cases}
a' & \quad x=a \\
x & \quad \text{otherwise}
\end{cases}
\]
The regret for this swap $\phi$ at time $t$ is:
\[
R^{\phi}_t = \sum_{s=1}^t \mathbbm{1}\{a^s = a\} \left( r^s(a') - r^s(a) \right)
\]
Define the random variable $\tau_k = \max \{t \le T_k \mid D^t \text{ occurs}\}$, with $\tau_k = 0$ if the event never occurs. Let the random variable $N_k$ denote the total number of times the agent plays $a$ while $D^t$ holds. By definition, $D^t$ does not hold for any $t > \tau_k$, meaning $N_k$ accumulates entirely within the first $\tau_k$ rounds. Consequently, the agent plays $a$ at least $N_k$ times up to round $\tau_k$.

To lower-bound the realized regret at $\tau_k$, we partition the rounds up to $\tau_k$ where the agent plays $a$ into exception rounds ($r^s(a) \ge r^s(a')$) and dominating rounds ($r^s(a') > r^s(a)$). Since $D^{\tau_k}$ holds, by definition of $\delta$-domination, there are at most $\delta \tau_k \le \delta T_k$ exception rounds, each contributing at worst $-1$ to the regret. Because the agent plays $a$ at least $N_k$ times, and at most $\delta T_k$ of those can be exception rounds, at least $N_k - \delta T_k$ plays must occur during dominating rounds. Each dominating round contributes at least $c > 0$ by the minimum reward gap condition. Thus, the regret at $\tau_k$ satisfies:
\[
R^{\phi}_{\tau_k} \ge c \big(N_k - \delta T_k\big) - \delta T_k = c N_k - \delta(c+1)T_k
\]

Notice that the expectation of $N_k$ is exactly the sum of the joint probabilities bounded by our deduction:
\[
\mathbb{E}[N_k] = \sum_{t=1}^{T_k} \Pr(a^t = a \cap D^t) \ge \epsilon' T_k
\]
Taking expectations on $R^{\phi}_{\tau_k}$, and substituting $\mathbb{E}[N_k] > \epsilon' T_k$, yields:
\[
\mathbb{E}[R^{\phi}_{\tau_k}] \ge \left( c\epsilon' - \delta(c+1) \right) T_k
\]

By our initial choice of $\delta = \frac{c\epsilon'}{2(c+1)}$, we can define the constant $C = c\epsilon' - \delta(c+1) > 0$, yielding $\mathbb{E}[R^{\phi}_{\tau_k}] \ge C T_k > 0$.

Because the instantaneous reward difference is bounded by $1$, the specific regret trajectory $R^\phi_t$ is $1$-Lipschitz. Consequently, for any deterministic time $t$ and any realized trajectory of actions and rewards, the regret satisfies $R^\phi_t \ge R^\phi_{\tau_k} - |t - \tau_k|$.

By definition, the overall swap regret dominates any specific swap function, and trivially dominates the identity swap (yielding zero regret). Therefore, $R_{\text{swap}}(t) \ge \max\big(0, R^\phi_t\big)$. Substituting the bound yields:
\[
R_{\text{swap}}(t) \ge \max\big(0, R^\phi_{\tau_k} - |t - \tau_k|\big)
\]

The maximum expected swap regret over the horizon $[1 \dots T_k]$ is bounded below by its time average:
\[
\max_{t \in [1 \dots T_k]} \mathbb{E}[R_{\text{swap}}(t)] \ge \frac{1}{T_k} \sum_{t=1}^{T_k} \mathbb{E}[R_{\text{swap}}(t)] \ge \frac{1}{T_k} \sum_{t=1}^{T_k} \mathbb{E}\left[ \max\big(0, R^\phi_{\tau_k} - |t - \tau_k|\big) \right]
\]

To lower-bound the sum on the right-hand side, we define the non-negative peak regret $R^+_{\tau_k} = \max(0, R^\phi_{\tau_k})$. Consider the local history interval $[\tau_k - \lfloor R^+_{\tau_k} \rfloor, \tau_k]$. Because the total regret cannot exceed the round count, $R^+_{\tau_k} \le \tau_k$, guaranteeing this interval is fully contained within $[1 \dots T_k]$.

Within this interval, the sequence $\max\big(0, R^\phi_{\tau_k} - (\tau_k - t)\big)$ forms a decreasing arithmetic progression. The sum of this sequence is bounded below by $\frac{1}{2}(R^+_{\tau_k})^2$. Applying this bound and Jensen's inequality ($\mathbb{E}[X^2] \ge (\mathbb{E}[X])^2$) yields:
\[
\max_{t \in [1 \dots T_k]} \mathbb{E}[R_{\text{swap}}(t)] \ge \frac{1}{T_k}\mathbb{E}\left[ \frac{1}{2}(R^+_{\tau_k})^2 \right] \ge \frac{(\mathbb{E}[R^+_{\tau_k}])^2}{2T_k}
\]
Since $R^+_{\tau_k} \ge R^\phi_{\tau_k}$, its expectation is bounded below by our earlier deduction: $\mathbb{E}[R^+_{\tau_k}] \ge \mathbb{E}[R^\phi_{\tau_k}] \ge C T_k$. Substituting this yields:
\[
\max_{t \in [1 \dots T_k]} \mathbb{E}[R_{\text{swap}}(t)] \ge \frac{(C T_k)^2}{2T_k} = \frac{C^2}{2} T_k
\]
Let $t^*_k = \arg\max_{t \in [1 \dots T_k]} \mathbb{E}[R_{\text{swap}}(t)]$ denote the deterministic time step achieving this maximum. This directly establishes $\mathbb{E}[R_{\text{swap}}(t^*_k)] \ge \frac{C^2}{2} T_k$.

Because $t^*_k \le T_k$, the expected time-averaged swap regret evaluated strictly at this deterministic horizon satisfies:
\[
\frac{1}{t^*_k} \mathbb{E}[R_{\text{swap}}(t^*_k)] \ge \frac{1}{t^*_k} \left( \frac{C^2}{2} T_k \right) \ge \frac{C^2}{2} > 0
\]
Furthermore, because maximal expected regret is bounded by time ($\mathbb{E}[R_{\text{swap}}(t^*_k)] \le t^*_k$), this inequality implies $t^*_k \ge \frac{C^2}{2} T_k$. As $T_k \to \infty$, the extracted sequence of deterministic horizons $t^*_k \to \infty$. Evaluating the limit superior over this sequence yields:
\[
\limsup_{T \to \infty} \frac{1}{T} \mathbb{E}\left[ R_{\text{swap}}(T) \right] \ge \limsup_{k \to \infty} \frac{1}{t^*_k} \mathbb{E}\left[ R_{\text{swap}}(t^*_k) \right] \ge \frac{C^2}{2} > 0
\]
Because No-Swap-Regret requires this limit superior to be bounded by zero against all adversaries, this contradiction completes the proof.
\end{proof}

\subsection{Contextual Learning}

Many practical learning algorithms condition their behavior on an observable state or context. For example, the Q-learning agents in \cite{CCDP20} condition their pricing decisions on the prices played in the previous round. A natural question is whether introducing contextual memory breaks the Domination-Avoiding guarantee. We establish that the DA class is closed under finite contexts: an agent that behaves as a DA learner within each specific context remains globally Domination-Avoiding.

\begin{definition}[Contextual Domination-Avoiding]
    Let $\mathcal{C}$ be a finite set of contexts. A Contextual Domination-Avoiding (C-DA) agent maintains an independent instance of a Domination-Avoiding (DA) learning algorithm for each context $c \in \mathcal{C}$. At each time step $t$, an adversary arbitrarily (and potentially adaptively) selects a context $c^t \in \mathcal{C}$. The C-DA agent queries the DA instance associated with $c^t$ to select its action $a^t$, and exclusively updates this active instance using the realized reward $r^t(a^t)$.
\end{definition}

\begin{customthm}{4.5} \label{thm:contextual_da}
    Any Contextual Domination-Avoiding (C-DA) agent operating over a finite context space $\mathcal{C}$ is a global Domination-Avoiding (DA) agent.
\end{customthm}

\begin{proof}
Fix an arbitrary $\epsilon > 0$. We will show there exists $\delta > 0$ such that $\operatorname{UTA}(\{d^t(\delta)\}) \le \epsilon$.

Let $|\mathcal{C}| = K$. Each context $c \in \mathcal{C}$ is governed by an independent DA instance. By definition, for any $\epsilon^* > 0$ and each context $c$, there exists a threshold $\delta_c > 0$ such that the Upper-Time-Average (UTA) of the instance playing a $\delta_c$-empirically dominated action (evaluated exclusively on its internal history) is at most $\epsilon^*$. We set $\epsilon^* = \frac{\epsilon}{4K}$ and define the uniform local threshold $\delta^* = \min_{c \in \mathcal{C}} \delta_c > 0$. We then define the global threshold $\delta = \delta^* \frac{\epsilon}{4K}$.

Suppose action $a$ is globally $\delta$-dominated by $a'$ up to time $t$. The total number of global exception rounds where $r^s(a) \ge r^s(a')$ is at most $\delta t$. Let $N^t(c)$ denote the number of times context $c$ occurs up to time $t$. The exception rounds realized while context $c$ is active are a subset of the global exceptions, bounded by $\delta t$.

We partition the contexts at any global time $t$ into two regimes: \emph{frequent} and \emph{rare}. Formally, we define a context $c$ as frequent at time $t$ if $N^t(c) > \frac{\epsilon}{4K} t$, and rare if $N^t(c) \le \frac{\epsilon}{4K} t$. 

For any frequent context $c$, rearranging its bounding inequality yields $t < \frac{4K}{\epsilon} N^t(c)$. Substituting this bounds the local exception rounds in terms of the internal clock:
\[
\left| \left\{ s \le t \mid c^s = c \text{ and } r^s(a) \ge r^s(a') \right\} \right| \le \delta t < \delta \left( \frac{4K}{\epsilon} N^t(c) \right) = \delta^* N^t(c)
\]
Thus, if $a$ is globally $\delta$-dominated at global time $t$, it is $\delta^*$-dominated within the local history of any frequent context $c$.

Recall that $d^t(\delta)$ denotes the marginal probability that the C-DA agent plays a globally $\delta$-dominated action at time $t$. We expand this probability by partitioning over the active context $c^t$ and its realization frequency $N^t(c)$. Dropping the domination condition for rare contexts and substituting local domination for frequent contexts yields:
\begin{align*}
d^t(\delta) &= \sum_{c \in \mathcal{C}} \Pr(a^t \text{ is globally } \delta\text{-dominated} \land c^t = c) \\
&\le \sum_{c \in \mathcal{C}} \Pr\left(N^t(c) \le \tfrac{\epsilon}{4K} t \land c^t = c \right) + \sum_{c \in \mathcal{C}} \Pr(a^t \text{ is locally } \delta^*\text{-dominated} \land c^t = c)
\end{align*}

We evaluate the time average over the global horizon $T$.
For any context $c \in \mathcal{C}$ and any realized sample path, the condition $(N^t(c) \le \frac{\epsilon}{4K} t \land c^t = c)$ can occur at most $\lfloor \frac{\epsilon}{4K} T \rfloor$ times. Therefore, the sum of the marginal probabilities over the horizon is deterministically bounded:
\begin{align*}
\limsup_{T \to \infty} \frac{1}{T} \sum_{t=1}^T \sum_{c \in \mathcal{C}} \Pr \Big( N^t(c) \le \tfrac{\epsilon}{4K} t \land c^t = c \Big) &\le \limsup_{T \to \infty}\sum_{c \in \mathcal{C}} \frac{1}{T} \lfloor \tfrac{\epsilon}{4K} T \rfloor \\
&\le K \left( \frac{\epsilon}{4K} \right) = \frac{\epsilon}{4}
\end{align*}

For any frequent context $c$, the time-averaged probability of playing a locally $\delta^*$-dominated action over the global horizon $T$ corresponds to the expected number of errors the local instance makes up to its random internal horizon $N^T(c)$, scaled by $1/T$. Because the local clock satisfies $N^T(c) \le T$ and error probabilities are non-negative, we can upper-bound this random sum by evaluating the local instance over a full deterministic horizon $T$ against an extended adversary. 

Let $d_c^\tau(\delta^*)$ denote the expected marginal probability that instance $c$ plays a $\delta^*$-dominated action at its internal time step $\tau$ against this extended sequence. Bounding the random horizon yields:
\[
\frac{1}{T} \sum_{t=1}^T \Pr(a^t \text{ is locally } \delta^*\text{-dominated} \land c^t = c) = \frac{1}{T} \sum_{\tau=1}^{N^T(c)} d_c^\tau(\delta^*)  \le \frac{1}{T} \sum_{\tau=1}^T d_c^\tau(\delta^*)
\]
Taking the limit superior, the Domination-Avoiding property of the local instance $c$ guarantees the sequence is bounded:
\[
\limsup_{T \to \infty} \frac{1}{T} \sum_{t=1}^T \Pr(a^t \text{ is locally } \delta^*\text{-dominated} \land c^t = c) \le \limsup_{T \to \infty} \frac{1}{T} \sum_{\tau=1}^T d_c^\tau(\delta^*) \le \frac{\epsilon}{4K}
\]
Summing this bound over all $K$ contexts yields:
\[
\limsup_{T \to \infty} \sum_{c \in \mathcal{C}} \frac{1}{T} \sum_{t=1}^T \Pr(a^t \text{ is locally } \delta^*\text{-dominated} \land c^t = c) \le K \left(\frac{\epsilon}{4K}\right) = \frac{\epsilon}{4}
\]

Combining the asymptotic bounds of both regimes yields $\operatorname{UTA}(\{d^t(\delta)\}) \le \frac{\epsilon}{4} + \frac{\epsilon}{4} = \frac{\epsilon}{2} < \epsilon$.
\end{proof}

\section{Deferred Proofs from Section 2}
\label{app:sec2}

\begin{lemma} \label{lem:da_equivalence}
    The uniform sampling formulation of a Domination-Avoiding agent (\cref{def:da}) is equivalent to the Upper-Time-Average formulation (\cref{def:alt_da}).
\end{lemma}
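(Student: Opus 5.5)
The plan is to prove both directions for a fixed adversary, which turns the statement into a comparison of the two quantifier structures. The key observation is that, by linearity, the uniform-sampling probability in \cref{def:da} is exactly a Cesàro mean:
\[
\Pr_{t \sim U(1\dots T)}\left[a^t \text{ is } \delta\text{-dominated up to time } t\right] = \frac{1}{T}\sum_{t=1}^T d^t(\delta).
\]
This holds because, conditioned on $t$, the probability over the internal randomization of the agent and adversary is precisely $d^t(\delta)$. Write $A_T(\delta)$ for this average.

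Both definitions then read ``for every adversary and every $\epsilon>0$ there is $\delta>0$ with $\ldots$''. In \cref{def:da} the condition is that $A_T(\delta)\le\epsilon$ for all sufficiently large $T$. In \cref{def:alt_da} the condition is that $\limsup_T A_T(\delta)\le\epsilon$.

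For \cref{def:da} $\Rightarrow$ \cref{def:alt_da}, suppose $A_T(\delta)\le\epsilon$ for all $T\ge T_0$. Then $\limsup_T A_T(\delta)\le\epsilon$ immediately, so the same $\delta$ works.

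For \cref{def:alt_da} $\Rightarrow$ \cref{def:da}, the limsup bound gives eventual bounds only with some slack. Given $\epsilon$, apply \cref{def:alt_da} with $\epsilon/2$ to obtain $\delta$ with $\limsup_T A_T(\delta)\le\epsilon/2$. By the definition of limsup there is $T_0$ such that $A_T(\delta)<\epsilon$ for all $T\ge T_0$, which is exactly \cref{def:da} for $\epsilon$.

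No step is a genuine obstacle. The only point needing care is the first identity: the event ``$a^t$ is $\delta$-dominated up to time $t$'' depends on the random history and on $t$, and the joint probability over $t\sim U$ and the agents' randomness must factor as the average of $d^t(\delta)$. This follows from the law of total probability, since the sampled $t$ is independent of the play. It is also worth noting that $\delta$-domination is monotone in $\delta$: a smaller $\delta$ makes domination harder, so $d^t(\delta)$ is nondecreasing in $\delta$. We do not need this fact, because both definitions quantify over $\delta$ in the same existential way and we reuse the same $\delta$ in each direction.
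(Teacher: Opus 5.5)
Your proof is correct and matches the paper's argument: you identify the uniform-sampling probability with the average $\frac{1}{T}\sum_{t=1}^T d^t(\delta)$, take the $\limsup$ for one direction, and apply the UTA definition at $\epsilon/2$ for the converse, using the same $\delta$ in each direction. Your justification of the averaging identity, via the independence of the sampled $t$ from the play, is a small but welcome addition that the paper leaves implicit.
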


\begin{proof}
    \cref{def:da} requires that for any adversary and any $\epsilon > 0$, there exists $\delta > 0$ such that for all sufficiently large $T$:
    \[
    \Pr_{t \sim {U}(1 \dots T)}\left[a^t \text{ is } \delta\text{-dominated} \text{ up to time } t \right] \le \epsilon
    \]
    Let $d^t(\delta)$ denote the marginal probability that the agent plays a $\delta$-dominated action at time $t$. Because $t$ is drawn uniformly from $\{1, \dots, T\}$, this probability evaluates to the finite-time average:
    \[
    \frac{1}{T} \sum_{t=1}^T d^t(\delta) \le \epsilon
    \]
    Assume the agent satisfies \cref{def:da}. Fix an arbitrary $\epsilon > 0$. There exists $\delta > 0$ such that the finite-time average is bounded by $\epsilon$ for all sufficiently large $T$. Taking the limit superior on both sides preserves this non-strict inequality directly, yielding:
    \[
    \limsup_{T \to \infty} \frac{1}{T} \sum_{t=1}^T d^t(\delta) \le \epsilon
    \]
    Substituting the Upper-Time-Average notation gives $\operatorname{UTA}(\{d^t(\delta)\}) \le \epsilon$, satisfying \cref{def:alt_da}.

    Conversely, assume the agent satisfies \cref{def:alt_da}. Fix an arbitrary $\epsilon > 0$. Invoking the definition at a tighter threshold of $\epsilon / 2 > 0$, there exists $\delta > 0$ such that:
    \[
    \limsup_{T \to \infty} \frac{1}{T} \sum_{t=1}^T d^t(\delta) \le \frac{\epsilon}{2}
    \]
    By the definition of the limit superior, if the limit superior of a sequence is bounded by $\epsilon/2$, the sequence itself can exceed $\epsilon$ at most finitely many times. Therefore, for all sufficiently large $T$, the finite-time average must be at most $\epsilon$, satisfying \cref{def:da}.
\end{proof}

\section{Deferred Proofs from Section 3}
\label{app:sec3}

The core intuition of Theorem 1 is to construct a symmetric bimodal distribution over a moderate base price and an extreme bonus price. By precisely choosing the mixing probabilities, we guarantee that any unilateral deviation yields a strictly lower expected payoff than the equilibrium strategy.

\begin{customthm}{1$'$}
    Consider the normalized Bertrand Logit game ($\alpha = 0$, $c = 0$, $\mu = 1$).
    For any target
    price $V > p^*$, where $p^*$ is the competitive Duopoly equilibrium price (of the continuous Bertrand game), there exists a Coarse Correlated Equilibrium (CCE) that is supported on
    strategies where both agents always bid 
    a price $p_i \ge V$.
\end{customthm}

Proving the stability of this equilibrium requires bounding the maximum utility a player can extract by unilaterally deviating. The following lemma establishes this structural ceiling, demonstrating that the maximum achievable profit against any supra-competitive opponent price remains bounded strictly below that price.

\begin{lemma}[Profit Gap]\label{lem:profit_gap}
    In the normalized market ($\alpha = 0$, $c = 0$, $\mu = 1$), for any opponent price $p_{-i} > p^* = 2$, the maximum achievable profit $u_{max}(p_{-i}) = \max_{p_i \ge 0} u_i(p_i, p_{-i})$ satisfies:
    \[
    u_{max}(p_{-i}) < p_{-i} - 1
    \]
\end{lemma}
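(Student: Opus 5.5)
The plan is to set the opponent's price to $q = p_{-i}$ and study $f(p) = p\,\frac{e^{-p}}{e^{-p}+e^{-q}} = \frac{p}{1+e^{p-q}}$ on $[0,\infty)$. This is the normalized profit with $\alpha=0$, $c=0$, $\mu=1$, after dividing numerator and denominator by $e^{a}$. Since $f(0)=0$, $f>0$ on $(0,\infty)$ and $f(p)\to 0$ as $p\to\infty$, the maximum is attained at an interior critical point. I will characterize that point through the first-order condition and then evaluate $f$ there.

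Differentiating gives $f'(p) = \frac{1+e^{p-q} - p e^{p-q}}{(1+e^{p-q})^2}$. A critical point $\hat p$ therefore satisfies $e^{\hat p - q}(\hat p - 1) = 1$, that is, $e^{\hat p-q} = \frac{1}{\hat p-1}$, which forces $\hat p>1$. Substituting back,
\[
u_{max}(q) = f(\hat p) = \frac{\hat p}{1+\frac{1}{\hat p-1}} = \hat p - 1 .
\]
The claim $u_{max}(q) < q-1$ is thus equivalent to $\hat p < q$. Uniqueness of $\hat p$ is not needed for this: the numerator $g(p)=1+e^{p-q}(1-p)$ has $g'(p) = -p\,e^{p-q}<0$, so $g$ is strictly decreasing, $f$ is unimodal, and $\hat p$ is the unique global maximizer.

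To show $\hat p<q$, I evaluate $g$ at $p=q$ and get $g(q) = 1 + (1-q) = 2-q$. This is negative exactly when $q>2=p^*$. Since $g$ is strictly decreasing and $g(\hat p)=0$, the inequality $g(q)<0$ gives $\hat p<q$, and hence $u_{max}(q)=\hat p-1<q-1$.

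The only delicate step is the clean identity $f(\hat p)=\hat p-1$ coming from the first-order condition, together with the monotonicity of $g$. These are what reduce the inequality to the sign of $g(q)$. Both steps are short, so I expect no real obstacle beyond verifying the algebra, and the threshold $q>2$ matches $p^*=2$ exactly, which serves as a consistency check.
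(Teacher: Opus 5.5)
Your proof is correct and is essentially the paper's argument in a different parametrization. The paper works with the undercut $x = p_{-i} - p_i$ and obtains $u_{max} = e^{x^*}$ and $p_{-i} - u_{max} = x^* + 1$, which is your identity $u_{max} = \hat p - 1$ with $\hat p = p_{-i} - x^*$; both arguments then use the sign $2 - p_{-i}$ of the derivative's numerator at $p_i = p_{-i}$ to get $\hat p < p_{-i}$. Your computation $g'(p) = -p\,e^{p-q} < 0$ also supplies the unimodality that the paper only asserts.
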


\begin{proof}
Substituting our parameters, the Logit profit function simplifies to $u_i(p_i, p_{-i}) = \frac{p_i}{1+e^{p_i-p_{-i}}}$.
The partial derivative with respect to $p_i$ is:
\[
\frac{\partial u_i}{\partial p_i} = \frac{1 + e^{p_i - p_{-i}}(1 - p_i)}{(1+e^{p_i-p_{-i}})^2}
\]
Evaluating this marginal profit where the agent matches the opponent's price ($p_i = p_{-i}$), the denominator is strictly positive and the numerator simplifies to $2 - p_{-i}$.
Since the opponent's price is supra-competitive ($p_{-i} > 2$), this marginal profit is strictly negative. Consequently, the profit function is decreasing at the matching price, implying the optimal response requires undercutting ($p_i < p_{-i}$).

Fixing the opponent's price $p_{-i}$, we define the strictly positive undercut amount as $x = p_{-i} - p_i > 0$.  We can now express the agent's profit as a function of the undercut:
\[
u_i(x) = \frac{p_{-i} - x}{1+e^{-x}}
\]
Because the Logit profit function is strictly unimodal, the unique root of the first-order condition yields the global maximum:
\[
\frac{d u_i}{dx} = \frac{-(1+e^{-x}) - (p_{-i} - x)(-e^{-x})}{(1+e^{-x})^2} = 0
\]
Setting the numerator to zero and rearranging provides an identity for the optimal undercut $x^*$:
\begin{equation} \label{eq:undercut}
p_{-i} - x^* = \frac{1+e^{-x^*}}{e^{-x^*}} = e^{x^*} + 1
\end{equation}
We calculate the maximum achievable profit by substituting \cref{eq:undercut} back into the profit equation:
\[
u_{max}(p_{-i}) = \frac{p_{-i} - x^*}{1+e^{-x^*}} = \frac{e^{x^*}+1}{1+e^{-x^*}} = e^{x^*}
\]
Using the identity $p_{-i} = x^* + e^{x^*} + 1$ derived from \cref{eq:undercut}, we substitute $u_{max}(p_{-i})$:
\[
p_{-i} = x^* + u_{max}(p_{-i}) + 1 \implies p_{-i} - u_{max}(p_{-i}) = x^* + 1
\]
Since the optimal strategy requires undercutting ($x^* > 0$), it follows that $p_{-i} - u_{max}(p_{-i}) > 1$. Because this continuous bound holds globally, it applies to any restricted discrete action space $\mathcal{P}$.
\end{proof}

With this bound on deviation profit established, we proceed to the main construction of the CCE distribution.

\begin{proof}[Proof of Theorem 1$'$]
Let $V > p^*$ be the target price. We construct a symmetric CCE supported on two prices: a \textit{base price} $p_L$ and a \textit{bonus price} $p_H$. We define the base price as $p_L = V$. Because $V > p^* = 2$, this ensures $p_L$ is supra-competitive. We then fix a bonus price $p_H \gg p_L$ such that:
\begin{equation} \label{eq:bound_exponential}
p_H^2 e^{-\frac{p_H}{p_L}} \le p_L^2 e^{-p_L}
\end{equation}
The existence of such a $p_H$ is guaranteed since the exponential decay term $e^{-p_H/p_L}$ asymptotically dominates the polynomial growth of $p_H^2$ as $p_H \to \infty$.

We define the symmetric joint strategy distribution $\sigma_{CCE}$ as:
\[
\sigma_{CCE} = \begin{cases}
(p_L, p_L) & \text{w.p.} \quad q \\
(p_H, p_H) & \text{w.p.} \quad 1-q
\end{cases}
\]
To calibrate the equilibrium, we select the mixing probability $q$ such that the expected payoff contributions from the two states are balanced: $q p_L = (1-q) p_H$. Solving for $q$ yields $q = \frac{p_H}{p_L + p_H}$. The expected equilibrium payoff is therefore $U_{CCE} = q p_L = (1-q) p_H$.

By construction, $p_H > p_L = V$. Therefore, the distribution $\sigma_{CCE}$ is supported exclusively on prices $p_i \ge V$.

To complete the proof, we must verify that $\sigma_{CCE}$ is a valid CCE. This requires that the expected utility of any unilateral deviation to \textbf{any} price $p$ does not exceed the equilibrium payoff $U_{CCE}$:
\[
U_{dev}(p) = q u_i(p, p_L) + (1-q) u_i(p, p_H) \le U_{CCE}
\]
By substituting our indifference condition $U_{CCE} = q p_L = (1-q) p_H$ into the right side and dividing by either $(1-q)$ or $q$, we rearrange this stability requirement into two mathematically equivalent forms:
\begin{align}
    \textbf{Form 1:} \quad \frac{p_L}{p_H} u_i(p, p_H) &\le p_L - u_i(p, p_L) \tag{I} \\
    \textbf{Form 2:} \quad \frac{p_H}{p_L} u_i(p, p_L) &\le p_H - u_i(p, p_H) \tag{II}
\end{align}
We verify stability by splitting the deviation price space into two continuous regions at the threshold $\hat{p} = \frac{p_H}{p_L}$.

\paragraph{Case 1: Low Deviations ($p \le \frac{p_H}{p_L}$)}
We evaluate stability using \textbf{Form 1}. By \cref{lem:profit_gap}, the profit gap against a static supra-competitive price $p_L$ is globally bounded, yielding $p_L - u_i(p, p_L) > 1$. Thus, the right-hand side is strictly greater than $1$.

For the left-hand side, we use the trivial utility bound $u_i(p, p_H) \le p$. Because we are evaluating the region where $p \le \hat{p}$, we can strictly bound the left-hand side:
\[
\frac{p_L}{p_H} u_i(p, p_H) \le \frac{p_L}{p_H} p \le \frac{p_L}{p_H} \left( \frac{p_H}{p_L} \right) = 1
\]
Chaining these inequalities yields $\frac{p_L}{p_H} u_i(p, p_H) \le 1 < p_L - u_i(p, p_L)$, confirming that any deviation in this lower range is strictly unprofitable.

\paragraph{Case 2: High Deviations ($p > \frac{p_H}{p_L}$)}
We evaluate stability using \textbf{Form 2}. By applying \cref{lem:profit_gap} to the bonus price $p_H$, the right-hand side is globally bounded below: $p_H - u_i(p, p_H) > 1$.

For the left-hand side, we bound the utility function: $u_i(p, p_L) < p e^{-(p-p_L)}$. The function $f(p) = p e^{-(p-p_L)}$ has derivative $f'(p) = e^{-(p-p_L)}(1 - p)$, which is negative for all $p > 1$. Since we constructed $p_H \gg p_L$, our threshold $\hat{p} = \frac{p_H}{p_L}$ is strictly greater than $1$. Thus, $f(p)$ is strictly decreasing across this upper region, and its maximum occurs exactly at the lower boundary $p = \hat{p}$. Substituting $\hat{p}$ into the left-hand side:
\[
\frac{p_H}{p_L} u_i(p, p_L) < \frac{p_H}{p_L} \left( \hat{p} e^{-(\hat{p} - p_L)} \right) = \frac{p_H}{p_L} \left( \frac{p_H}{p_L} \right) e^{-\left(\frac{p_H}{p_L} - p_L\right)} = \frac{p_H^2}{p_L^2} e^{-\frac{p_H}{p_L} + p_L}
\]
To guarantee this left-hand bound is at most $1$, we multiply both sides by $p_L^2 e^{-p_L}$, yielding the requirement:
\[
p_H^2 e^{-\frac{p_H}{p_L}} \le p_L^2 e^{-p_L}
\]
This is exactly the bound from \cref{eq:bound_exponential} we enforced during our construction step. Therefore:
\[
\frac{p_H}{p_L} u_i(p, p_L) \le 1 < p_H - u_i(p, p_H)
\]
Since unilateral deviations in both continuous price regions are unprofitable, $\sigma_{CCE}$ constitutes a valid Coarse Correlated Equilibrium entirely supported on prices $p_i \ge V$.

Because $\sigma_{CCE}$ is robust against any continuous deviation $p \ge 0$, it naturally remains a valid CCE over any finite grid $\mathcal{P}$ that contains $p_L$ and $p_H$.
\end{proof}

\section{Deferred Proofs from Section 4}
\label{app:sec4}

\subsection{General Convergence to IESPDS}

\begin{customlemma}{4.1}
    Suppose all $\mathcal{N}$ players use DA agents to repeatedly play the finite normal-form stage game $\Gamma$. For any player $i$, any surviving action $a_i \in S_i$, and every pure action $a'_i \in A_i$, there exists a surviving opponent profile $a_{-i} \in S_{-i}$ such that $u_i(a_i, a_{-i}) \ge u_i(a'_i, a_{-i})$.
\end{customlemma} 

\begin{proof}
We evaluate the probability of the opponents playing a non-surviving joint profile, denoted as $a_{-i}^t \notin S_{-i}$. This event occurs if and only if at least one opponent $j \neq i$ plays a non-surviving action $a_j \notin S_j$. By the union bound, the probability that the opponents play a non-surviving profile at time $t$ is bounded by the sum of the marginal probabilities of the opponents playing non-surviving actions:
\[
\Pr(a_{-i}^t \notin S_{-i}) \le \sum_{j \neq i} \sum_{a_j \notin S_j} p_j^t(a_j)
\]
By the definition of a non-surviving action, for every $a_j \notin S_j$, we have $\operatorname{UTA}(\{p_j^t(a_j)\}) = 0$. Because the limit superior is subadditive and our game has a finite number of players and actions, the limit superior of this finite sum is bounded by the sum of their individual zero limits:
\[
\limsup_{T \to \infty} \frac{1}{T} \sum_{t=1}^T \Pr(a_{-i}^t \notin S_{-i}) = 0
\]
Since probabilities are non-negative, the limit exists and evaluates to $0$.

Assume, for contradiction, that there exists a surviving action $a_i \in S_i$ and a pure action $a'_i \in A_i$ such that $u_i(a_i, a_{-i}) < u_i(a'_i, a_{-i})$ against \textbf{all} surviving opponent profiles $a_{-i} \in S_{-i}$. Let $c = \operatorname{UTA}(\{p_i^t(a_i)\}) > 0$ denote its survival frequency.
Because $a'_i$ strictly dominates $a_i$ against all surviving profiles, the event $u_i(a_i, a_{-i}^t) \ge u_i(a'_i, a_{-i}^t)$ can only occur if the opponents play a non-surviving profile. Let $I_i^t = \mathbbm{1}\{u_i(a_i, a_{-i}^t) \ge u_i(a'_i, a_{-i}^t)\}$ be the instantaneous indicator. It follows that $I_i^t \le \mathbbm{1}\{a_{-i}^t \notin S_{-i}\}$.

Let $D^t$ be the event that $a_i$ is $\delta$-empirically dominated by $a'_i$ up to time $t$. The complement event at some time $T$, denoted $\neg D^T$, requires that $a_i$ performed at least as well as $a'_i$ strictly more than $\delta T$ times (i.e., $\sum_{t=1}^T I_i^t > \delta T$). We bound the probability of this complement using Markov's inequality:
\[
\Pr(\neg D^T) = \Pr\left( \sum_{t=1}^T I_i^t > \delta T \right) \le \frac{\mathbb{E}\left[\sum_{t=1}^T I_i^t\right]}{\delta T} \le \frac{1}{\delta T} \sum_{t=1}^T \Pr(a_{-i}^t \notin S_{-i})
\]
From our earlier derivation, the time-average on the right-hand side converges to $0$. Thus, for any fixed $\delta > 0$, the sequence satisfies $\lim_{T \to \infty} \Pr(\neg D^T) = 0$.

Recall that $d_i^t(\delta)$ represents the overall probability of player $i$ playing an action that is currently $\delta$-dominated at time $t$. By bounding the joint probability using the complement event $\neg D^t$, we obtain:
\[
d_i^t(\delta) \ge \Pr(a_i^t = a_i \land D^t) \ge p_i^t(a_i) - \Pr(\neg D^t)
\]
Averaging over time $T$ and taking the limit superior on both sides yields:
\[
\operatorname{UTA}(\{d_i^t(\delta)\}) \ge \operatorname{UTA}(\{p_i^t(a_i)\}) - \limsup_{T \to \infty} \frac{1}{T} \sum_{t=1}^T \Pr(\neg D^t)
\]
Since the sequence $\Pr(\neg D^t)$ converges to $0$, its time-average also converges to $0$. Substituting this limit and our survival frequency $c$ yields:
\[
\operatorname{UTA}(\{d_i^t(\delta)\}) \ge c - 0 = c
\]
This lower bound holds for all $\delta > 0$. However, because player $i$ is a DA agent, choosing an error threshold of $\epsilon = c / 2 > 0$ implies there exists some $\delta > 0$ such that $\operatorname{UTA}(\{d_i^t(\delta)\}) \le c / 2$. This yields the contradiction $c \le c/2$. Therefore, there must exist some $a_{-i} \in S_{-i}$ such that $u_i(a_i, a_{-i}) \ge u_i(a'_i, a_{-i})$.
\end{proof}

With Lemma \ref{lem:surviving_undominated_N} established, we have a strict local condition: an action can only survive if it remains undominated with respect to the surviving profiles of the opponents. We now apply this structural constraint recursively to prove \cref{thm:da_IESPDS}.

\begin{customthm}{2}
    Consider any finite $\mathcal{N}$-player game and any $\mathcal{N}$ Domination-Avoiding agents that repeatedly play this game against each other.
    For every action $a_i$ that is removable by Iterated Elimination of Strictly Purely Dominated Strategies we have that 
    \[
    \lim_{T\rightarrow \infty} \Pr_{t \sim {U}(1 \dots T)} [\text{player } i \text{ plays } a_i \text{ at time } t]=0
    \]
    where the probability is taken over a  {uniform} random choice of a time $t$ in the first $T$ time steps as well as the internal randomization of the agents.
\end{customthm}

\begin{proof}
Recall that $S_i^k$ denotes the set of actions for player $i$ that survive after $k$ rounds of Iterated Elimination of Strictly Purely Dominated Strategies (IESPDS). We proceed by induction on the elimination depth $k$ to show that the set of empirically surviving actions $S_i$ is a subset of $S_i^k$ for all $k \ge 0$ and all players $i \in \mathcal{N}$.

\textbf{Base Case ($k=0$):} By definition, the initial IESPDS set is the entire action space, $S_i^0 = A_i$. Since empirically surviving actions are inherently a subset of all available actions, $S_i \subseteq S_i^0$ holds trivially for all $i \in \mathcal{N}$.

\textbf{Inductive Step:} Assume that for some $k \ge 0$, $S_i \subseteq S_i^k$ holds for all players $i \in \mathcal{N}$. This implies that the joint empirically surviving profiles of the opponents satisfy $S_{-i} \subseteq S_{-i}^k$. We must show $S_i \subseteq S_i^{k+1}$ for an arbitrary player $i$.
Consider any empirically surviving action $a_i \in S_i$. By the inductive hypothesis, $a_i \in S_i^k$. By \cref{lem:surviving_undominated_N}, for every pure action $a'_i \in A_i$, there exists a surviving opponent profile $a_{-i} \in S_{-i}$ such that $u_i(a_i, a_{-i}) \ge u_i(a'_i, a_{-i})$. Since $S_{-i} \subseteq S_{-i}^k$, this specific profile $a_{-i}$ is also in $S_{-i}^k$. Therefore, there does not exist any single action $a'_i$ that strictly dominates $a_i$ against all opponent profiles in $S_{-i}^k$. By the recursive definition of IESPDS, $a_i$ cannot be eliminated at step $k$, implying $a_i \in S_i^{k+1}$.

By induction, $S_i \subseteq S_i^k$ for all $k \ge 0$. Because the stage game $\Gamma$ has finite action spaces, the elimination process must stabilize after a finite number of steps $M$, yielding the final stable set $S_i^\infty = S_i^M$. Since $S_i \subseteq S_i^\infty$, any action $a_i$ removed by IESPDS satisfies $a_i \notin S_i^\infty$. Consequently, it cannot be an empirically surviving action ($a_i \notin S_i$), meaning its Upper-Time-Average evaluates to zero: 
\[
\operatorname{UTA}(\{p_i^t(a_i)\}) = \limsup_{T \to \infty} \frac{1}{T} \sum_{t=1}^T p_i^t(a_i) = 0
\]
Since probabilities are non-negative, a limit superior of zero implies that the sequence of finite-time averages converges to exactly zero. Because the probability of playing $a_i$ at a uniformly chosen time step $t \in \{1 \dots T\}$ evaluates precisely to this finite-time average (by the law of total probability), we obtain:
\[
\lim_{T\rightarrow \infty} \Pr_{t \sim {U}(1 \dots T)} [\text{player } i \text{ plays } a_i \text{ at time } t] = 0
\]
\end{proof}

\subsection{IESPDS Characterization for Bertrand Logit}

\begin{customthm}{5}
    Consider a symmetric Bertrand Logit pricing game over a finite $\epsilon$-dense grid $\mathcal{P}$ of possible price levels. Let $p^*$ be the Nash equilibrium of the continuous game, and assume $\min(\mathcal{P}) < p^* < \max(\mathcal{P})$.
    Then only prices $p \in \mathcal{P}$ satisfying $|p-p^*| \le 2 \epsilon$ can survive Iterated Elimination of Strictly Purely Dominated Strategies (IESPDS).
\end{customthm}

To execute the systematic elimination described, we must define the boundary conditions under which non-competitive prices become strictly dominated. We define two continuous threshold functions, parameterized by the maximum grid density $\epsilon$:
\begin{align}
    \Phi^+(p) &= c + \frac{\epsilon}{1-e^{-\mu\epsilon}} \left[ 1 + \frac{1}{1 + \alpha e^{\mu p}} \right] \label{eq:phi_upper} \\
    \Phi^-(p) &= c + \frac{\epsilon}{e^{\mu\epsilon} - 1} \left[ 1 + \frac{1}{1 + \alpha e^{\mu p}} \right] \label{eq:phi_lower}
\end{align}
Because both $\Phi^+(p)$ and $\Phi^-(p)$ are continuous non-increasing and bounded for all $p \ge 0$, each possesses a unique fixed point, which we denote as $p^+$ and $p^-$ respectively (i.e., $p^+=\Phi^+(p^+)$ and $p^-=\Phi^-(p^-)$).

\begin{lemma}[Bounds on the Nash equilibrium]\label{lem:bounds}
The values $p^-$ and $p^+$ bound the equilibrium $p^*$ from
below and from above and are close to each other.  Specifically, $p^- < p^* < p^+ \le p^-+2\epsilon$.
\end{lemma}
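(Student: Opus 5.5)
The plan is to rewrite everything in terms of one family of fixed-point equations indexed by a scalar coefficient, and then compare coefficients. Let $g(p) = 1 + \frac{1}{1+\alpha e^{\mu p}}$. This function is continuous and non-increasing on $[0,\infty)$, and it takes values in $(1,2]$ (exactly $2$ when $\alpha=0$). For $K>0$ let $p_K$ be the unique solution of $p = c + K g(p)$. With this notation:
\begin{itemize}
\item the Nash price $p^*$ is $p_K$ with $K=1/\mu$;
\item $p^+$ is $p_K$ with $K^+ = \frac{\epsilon}{1-e^{-\mu\epsilon}}$;
\item $p^-$ is $p_K$ with $K^- = \frac{\epsilon}{e^{\mu\epsilon}-1}$.
\end{itemize}
Existence and uniqueness of $p_K$ come from $F_K(p) = p - c - K g(p)$. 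This function is continuous and strictly increasing in $p$, since $p$ is increasing and $-Kg(p)$ is non-decreasing. It is negative at $p=0$ (because $c \ge 0$ and $Kg>0$) and tends to $+\infty$. This is the same fact the paper uses to define $p^\pm$.

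\textbf{Step 1 (monotonicity in $K$).} Let $K_1<K_2$. Since $g>0$, we have $F_{K_2}(p_{K_1}) = -(K_2-K_1)g(p_{K_1}) < 0$. Because $F_{K_2}$ is strictly increasing, its root satisfies $p_{K_2} > p_{K_1}$. Hence $K \mapsto p_K$ is strictly increasing.

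\textbf{Step 2 (ordering of coefficients).} Put $x=\mu\epsilon>0$. The inequalities $1-e^{-x}<x<e^{x}-1$ give $K^- < 1/\mu < K^+$. Combined with Step 1, this yields $p^- < p^* < p^+$.

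\textbf{Step 3 (the gap).} I would write the difference using the fixed-point equations and then use $p^+>p^-$, which gives $g(p^+) \le g(p^-)$:
\[
p^+ - p^- = K^+ g(p^+) - K^- g(p^-) \le (K^+ - K^-)\, g(p^-) \le 2(K^+-K^-).
\]
The key computation is that the two coefficients differ by exactly $\epsilon$:
\[
K^+ - K^- = \epsilon\left(\frac{e^{x}}{e^{x}-1} - \frac{1}{e^{x}-1}\right) = \epsilon .
\]
Therefore $p^+ \le p^- + 2\epsilon$.

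The only step needing care is Step 3. The monotonicity of $g$ must be applied in the right direction: $p^+>p^-$ forces $g(p^+)\le g(p^-)$, which lets the two terms be bounded by a single factor $g(p^-)$. The cancellation $K^+-K^-=\epsilon$ is what makes the bound independent of $\mu$. Everything else follows from elementary monotonicity.
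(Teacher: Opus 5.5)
Your proof is correct and follows the paper's argument. You order the coefficients $\frac{\epsilon}{e^{\mu\epsilon}-1} < \frac{1}{\mu} < \frac{\epsilon}{1-e^{-\mu\epsilon}}$ to get $p^- < p^* < p^+$, then bound $p^+ - p^- \le \Phi^+(p^-) - \Phi^-(p^-)$ using monotonicity, where the coefficient difference collapses to exactly $\epsilon$ and the bracket is at most $2$. Your Step 1 (strict monotonicity of the fixed point $p_K$ in $K$ via $F_K$) and your existence/uniqueness check make explicit what the paper only asserts, so they tighten the same route rather than replace it.
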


\begin{proof}
For any strictly positive grid gap $\epsilon > 0$, standard exponential inequalities ($e^x > 1+x$) dictate that the leading coefficients of our threshold functions strictly bound the continuous inverse price sensitivity $\frac{1}{\mu}$:
\[
\frac{\epsilon}{e^{\mu\epsilon} - 1} < \frac{1}{\mu} < \frac{\epsilon}{1 - e^{-\mu\epsilon}}
\]
Because the bracketed demand terms in \cref{eq:phi_upper,eq:phi_lower} are identical and strictly positive, this coefficient ordering ensures that the fixed points bound the continuous Nash Equilibrium: $p^- < p^* < p^+$.

To evaluate the maximum geometric distance, we bound the absolute width of this continuous interval:
\[
p^+ - p^- = \Phi^+(p^+) - \Phi^-(p^-)
\]
Because $\Phi^+(p)$ is non-increasing and $p^+ > p^-$, we can construct a valid upper bound by evaluating the first term at $p^-$ instead:
\[
p^+ - p^- \le \Phi^+(p^-) - \Phi^-(p^-)
\]
Substituting \cref{eq:phi_upper,eq:phi_lower}, the marginal cost constants ($c$) cancel. Factoring out the common bracketed demand term yields:
\[
p^+ - p^- \le \epsilon \left[ \frac{1}{1-e^{-\mu\epsilon}} - \frac{1}{e^{\mu\epsilon}-1} \right] \left( 1 + \frac{1}{1 + \alpha e^{\mu p^-}} \right)
\]
Multiplying the numerator and denominator of the first internal fraction by $e^{\mu\epsilon}$ yields $\frac{e^{\mu\epsilon}}{e^{\mu\epsilon}-1}$. Subtracting the second fraction, the term inside the square brackets simplifies to $1$. Furthermore, because our model permits any relative outside option strength $\alpha \ge 0$, the remaining bracketed demand term evaluates to at most $2$. Therefore:
\[
p^+ - p^- \le 2\epsilon \implies p^+ \le p^- + 2\epsilon
\]
\end{proof}

Our iterative elimination process relies on two symmetric lemmas. The first establishes that if the opponent's maximum possible price is bounded, the highest available grid prices become strictly dominated by undercutting.

\begin{lemma}[Strict Dominance of Undercutting] \label{lem:undercutting}
    Let $p^{(j)} \in \mathcal{P}$ be a grid price such that $p^{(j)} > p^+$. The pure strategy of playing $p^{(j)}$ is strictly dominated by undercutting to the adjacent lower price $p^{(j-1)}$, against any opponent pure strategy bounded by $p_{-i} \le p^{(j)}$.
\end{lemma}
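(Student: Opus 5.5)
The plan is a direct algebraic comparison of the two profits, reducing the dominance inequality to the condition $p - c > \Phi^+(p)$, which then follows from the fixed-point property of $p^+$. Fix $p = p^{(j)} > p^+$, the undercut $q = p^{(j-1)}$ with gap $g = p - q \in (0,\epsilon]$ (by $\epsilon$-density), and any opponent price $r \le p$.

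\textbf{Step 1: simplify the payoffs.} Dividing numerator and denominator of $Q_i$ by $e^{\mu a}$ gives
\[
u(x,r) = \frac{(x-c)e^{-\mu x}}{e^{-\mu x}+B}, \qquad B = e^{-\mu r}+\alpha > 0 .
\]
So everything about the opponent (and the outside option) is collected in the single positive constant $B$.

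\textbf{Step 2: rewrite $u(q,r) > u(p,r)$ as a threshold on $p$.} I will clear the positive denominators and multiply through by $e^{\mu g}$, using $e^{\mu g}e^{-\mu p} = e^{-\mu q}$. This turns the target inequality into
\[
(q-c)\bigl(e^{-\mu q} + e^{\mu g}B\bigr) > (p-c)\bigl(e^{-\mu q}+B\bigr).
\]
Substituting $q - c = (p-c) - g$ and collecting terms gives
\[
B\bigl[(p-c)(e^{\mu g}-1) - g e^{\mu g}\bigr] > g e^{-\mu q}.
\]
Dividing by $B(e^{\mu g}-1) > 0$ yields the equivalent condition
\[
p - c > \frac{g}{1-e^{-\mu g}} + \frac{g}{1-e^{-\mu g}}\cdot\frac{e^{-\mu p}}{B}.
\]
Every manipulation here multiplies or divides by a strictly positive quantity. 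So the steps are reversible, and no sign assumption on $q - c$ is needed.

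\textbf{Step 3: remove the dependence on $g$ and $r$.} Two monotonicity facts do this.
\begin{itemize}
\item The map $g \mapsto g/(1-e^{-\mu g})$ is increasing on $(0,\infty)$. This is an elementary derivative check, equivalent to $e^{x} > 1 + x$ with $x = \mu g$. Hence this factor is at most $\epsilon/(1-e^{-\mu\epsilon})$.
\item Since $r \le p$, we have $B \ge e^{-\mu p} + \alpha$. Therefore
\[
\frac{e^{-\mu p}}{B} \le \frac{1}{1+\alpha e^{\mu p}} .
\]
\end{itemize}
With both bounds, the right-hand side of Step 2 is at most exactly $\Phi^+(p) - c$. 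So it suffices to show $p > \Phi^+(p)$.

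\textbf{Step 4: close using the fixed point.} $\Phi^+$ is non-increasing and $p > p^+ = \Phi^+(p^+)$. Hence $\Phi^+(p) \le \Phi^+(p^+) = p^+ < p$, which is what Step 3 requires.

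\textbf{Main obstacle.} I expect the main obstacle to be Step 2. The difficulty is organizing the algebra so that the opponent's price enters only through $B$, and so that the bound $r \le p$ is used in the right direction. It is used in the term $e^{-\mu p}/B$, where a lower opponent price only makes undercutting more attractive. Steps 3 and 4 are then routine.
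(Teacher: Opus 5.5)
Your proof is correct and takes essentially the paper's route. Your $e^{-\mu p}/B$ is exactly $1/Z$ for the paper's $Z = e^{\mu(p^{(j)}-p_{-i})} + \alpha e^{\mu p^{(j)}}$, and like the paper you reduce dominance to $p^{(j)} - c > \frac{\Delta}{1-e^{-\mu\Delta}}\bigl(1 + \frac{1}{Z}\bigr)$, take the worst case $p_{-i} = p^{(j)}$, use monotonicity of $x/(1-e^{-\mu x})$ to replace $\Delta$ by $\epsilon$, and close with the fixed point $p^+$ and $\Phi^+$ being non-increasing. One small slip: your opening sentence should read $p > \Phi^+(p)$, not $p - c > \Phi^+(p)$, since $\Phi^+$ already contains $c$; you state it correctly in Step 3.
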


Symmetrically, the second lemma establishes that if the opponent's minimum possible price is bounded, the lowest available grid prices become strictly dominated by up-pricing.

\begin{lemma}[Strict Dominance of Up-Pricing] \label{lem:up_pricing}
    Let $p^{(j)} \in \mathcal{P}$ be a grid price such that $p^{(j)} < p^-$. The pure strategy of playing $p^{(j)}$ is strictly dominated by up-pricing to the adjacent higher price $p^{(j+1)}$, against any opponent pure strategy bounded by $p_{-i} \ge p^{(j)}$.
\end{lemma}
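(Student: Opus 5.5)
The plan is a direct computation: write the payoff ratio of the two adjacent prices, reduce strict dominance to a single inequality in the own price $p$, and match that inequality to the threshold function $\Phi^-$ from \cref{eq:phi_lower}.

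\textbf{Setup.} Write $p=p^{(j)}$, $p'=p^{(j+1)}$ and $h=p'-p\in(0,\epsilon]$, and let $q=p_{-i}\ge p$ be the opponent's price. Dividing numerator and denominator of $Q_i$ by $e^{\mu a}$ gives
\[
Q_i(p,q)=\frac{e^{-\mu p}}{D(p)},\qquad D(p)=e^{-\mu p}+e^{-\mu q}+\alpha .
\]
I must show $u_i(p',q)>u_i(p,q)$ for every such $q$.

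\textbf{Degenerate case $p\le c$.} If $p'\le c$, then $p'-c$ is nonpositive and closer to zero than $p-c<0$. Since $Q_i$ is strictly decreasing in own price, this gives $(p'-c)Q_i(p')\ge (p'-c)Q_i(p)>(p-c)Q_i(p)$. The boundary case $p'=c$ is immediate, since then $u_i(p')=0>u_i(p)$. If instead $p'>c\ge p$, then $u_i(p')>0\ge u_i(p)$. So the real content is the case $p>c$.

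\textbf{Main case $p>c$.} Both payoffs are positive, so I compare their ratio. Let $s_p=e^{-\mu p}/D(p)$ be player $i$'s share at price $p$. One checks that
\[
\frac{D(p')}{D(p)}=1-s_p\bigl(1-e^{-\mu h}\bigr),
\]
so the condition $u_i(p')>u_i(p)$ is equivalent to
\[
\frac{p'-c}{p-c}>e^{\mu h}\frac{D(p')}{D(p)}, \quad\text{i.e.}\quad 1+\frac{h}{p-c}>1+(e^{\mu h}-1)(1-s_p).
\]
This rearranges to
\[
p-c<\frac{h}{e^{\mu h}-1}\cdot\frac{1}{1-s_p}.
\]
The right-hand side admits two monotonicity bounds:
\begin{itemize}
\item $x\mapsto x/(e^{\mu x}-1)$ is decreasing, so $h/(e^{\mu h}-1)\ge \epsilon/(e^{\mu\epsilon}-1)$.
\item $s_p$ is increasing in $q$, so the worst case is $q=p$. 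There $s_p=1/(2+\alpha e^{\mu p})$, which gives
\[
\frac{1}{1-s_p}=1+\frac{1}{1+\alpha e^{\mu p}}.
\]
\end{itemize}
Hence it suffices to show $p<\Phi^-(p)$.

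\textbf{Conclusion.} Since $\Phi^-$ is non-increasing and $p<p^-$, we get $\Phi^-(p)\ge\Phi^-(p^-)=p^-> p$, which closes the argument.

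\textbf{Main obstacle.} The delicate step is the algebra producing the share identity $s_p$ and the correct direction of monotonicity in $q$. The ratio reformulation must identify $q=p$ as the worst case, which is exactly where the hypothesis $p_{-i}\ge p^{(j)}$ enters. I would verify this by differentiating $s_p$ in $q$: $e^{-\mu q}$ decreases in $q$, so $D(p)$ decreases and $s_p$ increases. Everything else is a routine check.
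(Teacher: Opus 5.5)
Your proof is correct and follows the paper's argument: you reduce dominance to an inequality in the own price, identify the matching price $p_{-i}=p^{(j)}$ as the worst case, replace the gap by $\epsilon$ using monotonicity of $x/(e^{\mu x}-1)$, and close with $p^{(j)}<p^-=\Phi^-(p^-)\le\Phi^-(p^{(j)})$. Your share/ratio formulation, together with the separate $p\le c$ case, repackages the paper's cross-multiplication with $Z=e^{\mu(p^{(j)}-p_{-i})}+\alpha e^{\mu p^{(j)}}$; it has the small advantage that monotonicity in $q$ holds unconditionally, whereas the paper's ``strictly decreasing'' claim tacitly assumes the bracket $(p^{(j)}-c)(e^{\mu\Delta}-1)-\Delta$ is positive.
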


We defer the proofs of these two boundary lemmas to the end of this section. Armed with these elimination conditions, we now prove our main convergence result.

\begin{proof}[Proof of \cref{thm:IESPDS_collapse}]
We proceed by a two-sided induction on the finite, ordered discrete price grid $\mathcal{P} = \{p^{(1)}, p^{(2)}, \dots, p^{(M)}\}$. Because the IESPDS survival set in finite games is order-independent, we can analyze the elimination in two distinct directional phases.

\paragraph{Top-Down Elimination:}
If the maximum grid price already satisfies $p^{(M)} \le p^+$, then the upper bound of our survival interval is trivially satisfied and no top-down elimination is required. 
Otherwise, we have $p^{(M)} > p^+$. In the initial game, the opponent's maximum possible price is naturally restricted by the grid boundary: $p_{-i} \le p^{(M)}$. By \cref{lem:undercutting}, the pure strategy $p^{(M)}$ is strictly dominated by $p^{(M-1)}$ and is eliminated. 

Proceeding inductively, let the highest surviving grid price at any step be $p^{(j)}$. As long as $p^{(j)} > p^+$, the opponent's surviving strategy space is restricted to $p_{-i} \le p^{(j)}$. \cref{lem:undercutting} guarantees $p^{(j)}$ is strictly dominated by $p^{(j-1)}$. This elimination cascades downward, halting only when the highest surviving price satisfies $p \le p^+$.

\paragraph{Bottom-Up Elimination:}
Symmetrically, if the minimum grid price already satisfies $p^{(1)} \ge p^-$, then the lower bound is trivially satisfied. 
Otherwise, we have $p^{(1)} < p^-$. The opponent's minimum surviving price is initially restricted to $p_{-i} \ge p^{(1)}$. By \cref{lem:up_pricing}, $p^{(1)}$ is strictly dominated by $p^{(2)}$ and is eliminated. 

Proceeding inductively, let the lowest surviving price be $p^{(j)}$. As long as $p^{(j)} < p^-$, the opponent is restricted to $p_{-i} \ge p^{(j)}$. \cref{lem:up_pricing} guarantees $p^{(j)}$ is strictly dominated by $p^{(j+1)}$. This cascading elimination proceeds upward, halting only when the lowest surviving price satisfies $p \ge p^-$.

The IESPDS process halts, leaving a surviving set of prices restricted strictly to the interval $[p^-, p^+]$. By \cref{lem:bounds}, we have $p^- < p^* < p^+$ and the absolute width of this interval is bounded by $p^+ - p^- \le 2\epsilon$. Because the Nash Equilibrium $p^*$ is strictly bounded within this interval, the absolute distance from $p^*$ to any surviving discrete price $p$ cannot exceed the interval's maximum width. Consequently, any surviving price $p \in \mathcal{P}$ must satisfy $|p - p^*| \le 2\epsilon$.
\end{proof}

\begin{remark}[Asymptotic Tightness of the Bound]
    The $2\epsilon$ survival bound established in \cref{thm:IESPDS_collapse} relies on a conservative geometric worst-case: because the true continuous Nash Equilibrium $p^*$ must fall somewhere within the surviving interval $[p^-, p^+]$ of maximum width $2\epsilon$, its absolute distance to either boundary is bounded by $2\epsilon$. However, as the discrete grid approaches continuity ($\epsilon \to 0$), this bound effectively halves. A first-order Taylor expansion of the leading threshold coefficients reveals that they spread symmetrically around the continuous inverse price sensitivity $\frac{1}{\mu}$. Specifically, $\frac{\epsilon}{1-e^{-\mu\epsilon}} \approx \frac{1}{\mu} + \frac{\epsilon}{2}$ and $\frac{\epsilon}{e^{\mu\epsilon}-1} \approx \frac{1}{\mu} - \frac{\epsilon}{2}$. Because the threshold boundaries expand symmetrically, $p^*$ inherently approaches the exact midpoint of the surviving interval. Consequently, for sufficiently dense grids, the maximum distance from the equilibrium to any surviving price tightens to a single $\epsilon$, asymptotically confining the IESPDS set to $[p^* - \epsilon, p^* + \epsilon]$.
\end{remark}

We now provide the deferred proofs for the boundary dominance lemmas.

\begin{proof}[Proof of \cref{lem:undercutting}]
Let the actual price gap between the adjacent grid points be $\Delta = p^{(j)} - p^{(j-1)}$. By the definition of our $\epsilon$-dense grid, we know $0 < \Delta \le \epsilon$. We must show that $u_i(p^{(j)} - \Delta, p_{-i}) > u_i(p^{(j)}, p_{-i})$ for all $p_{-i} \le p^{(j)}$.
Substituting the Logit profit function $u_i = (p_i - c) \cdot Q_i$, the required strict inequality is:
\[
(p^{(j)} - \Delta - c) \frac{e^{\mu \cdot (a - p^{(j)} + \Delta)}}{e^{\mu \cdot (a - p^{(j)} + \Delta)} + e^{\mu \cdot (a - p_{-i})} + e^{\mu a_0}} > (p^{(j)} - c) \frac{e^{\mu \cdot (a - p^{(j)})}}{e^{\mu \cdot (a - p^{(j)})} + e^{\mu \cdot (a - p_{-i})} + e^{\mu a_0}}
\]
Dividing the numerator and denominator of both fractions by $e^{\mu \cdot (a - p^{(j)})}$, and letting $Z = e^{\mu \cdot (p^{(j)} - p_{-i})} + \alpha e^{\mu p^{(j)}} > 0$, the inequality becomes:
\[
(p^{(j)} - c - \Delta) \frac{e^{\mu \Delta}}{e^{\mu \Delta} + Z} > (p^{(j)} - c) \frac{1}{1 + Z}
\]
Since both denominators are strictly positive, we cross-multiply and isolate $Z$. Dividing through by $e^{\mu \Delta}$ yields:
\[
Z \left[ (p^{(j)} - c) (1 - e^{-\mu \Delta}) - \Delta \right] > \Delta
\]
Substituting $Z$ back into the inequality, we obtain the fundamental dominance condition:
\begin{equation} \label{eq:undercut_Z}
\left( e^{\mu \cdot (p^{(j)}-p_{-i})} + \alpha e^{\mu p^{(j)}} \right) \left[ (p^{(j)} - c) (1 - e^{-\mu\Delta}) - \Delta \right] > \Delta
\end{equation}
We require this strict inequality to hold for all opponent prices $p_{-i} \le p^{(j)}$. Observe that the left-hand side is strictly decreasing with respect to $p_{-i}$. Therefore, the lower bound of the left-hand side occurs at the upper boundary of the opponent's strategy space, where they match the price exactly ($p_{-i} = p^{(j)}$, yielding $e^0 = 1$). If the strict inequality holds for this worst-case opponent strategy, it holds globally for all $p_{-i} < p^{(j)}$. 
Substituting $p_{-i} = p^{(j)}$ into \cref{eq:undercut_Z} and rearranging to isolate $p^{(j)}$ gives the requirement:
\[
p^{(j)} > c + \frac{\Delta}{1 - e^{-\mu\Delta}} \left[ 1 + \frac{1}{1 + \alpha e^{\mu p^{(j)}}} \right]
\]
Notice that the function $g(x) = \frac{x}{1-e^{-\mu x}}$ is strictly increasing for all $x > 0$. Because our grid gap satisfies $\Delta \le \epsilon$, it strictly follows that $g(\Delta) \le g(\epsilon)$. Therefore, if the inequality holds when evaluated at the maximum possible gap $\epsilon$, it holds for the actual realized gap $\Delta$. Replacing $\Delta$ with $\epsilon$ recovers our upper threshold definition:
\[
p^{(j)} > \Phi^+(p^{(j)})
\]
Finally, we apply our initial assumption that $p^{(j)} > p^+$. Because $\Phi^+(p)$ is non-increasing, this assumption ensures $p^{(j)} > p^+ = \Phi^+(p^+) \ge \Phi^+(p^{(j)})$. Thus, $p^{(j)} > \Phi^+(p^{(j)})$ holds strictly. This satisfies the worst-case boundary condition, which establishes that the pure strategy $p^{(j)}$ is strictly dominated by $p^{(j-1)}$.
\end{proof}

\begin{proof}[Proof of \cref{lem:up_pricing}]
Let the actual gap be $\Delta = p^{(j+1)} - p^{(j)}$. We know $0 < \Delta \le \epsilon$. We require $u_i(p^{(j)} + \Delta, p_{-i}) > u_i(p^{(j)}, p_{-i})$ for all $p_{-i} \ge p^{(j)}$.

Applying the exact cross-multiplication and $Z$ substitution steps as in \cref{lem:undercutting}, the strict dominance requirement simplifies to:
\[
\left( e^{\mu \cdot (p^{(j)}-p_{-i})} + \alpha e^{\mu p^{(j)}} \right) \left[ (p^{(j)} - c) (e^{\mu\Delta} - 1) - \Delta \right] < \Delta
\]
Here, the strict dominance requirement is strictly bounded above by $\Delta$. Because the left-hand side is strictly decreasing with respect to $p_{-i}$, its supremum over the opponent's strategy space occurs at the lower boundary, $p_{-i} = p^{(j)}$.
Substituting $p_{-i} = p^{(j)}$ and rearranging to isolate $p^{(j)}$ yields the condition:
\[
p^{(j)} < c + \frac{\Delta}{e^{\mu\Delta} - 1} \left[ 1 + \frac{1}{1 + \alpha e^{\mu p^{(j)}}} \right]
\]
Notice that the function $h(x) = \frac{x}{e^{\mu x} - 1}$ is strictly decreasing for $x > 0$. Because $\Delta \le \epsilon$, it follows that $h(\Delta) \ge h(\epsilon)$. Therefore, if $p^{(j)}$ falls strictly below the boundary evaluated at $\epsilon$, it falls below the boundary evaluated at $\Delta$.
Replacing $\Delta$ with $\epsilon$ recovers our lower threshold definition:
\[
p^{(j)} < \Phi^-(p^{(j)})
\]
By the non-increasing property of $\Phi^-(p)$, our assumption that $p^{(j)} < p^-$ ensures $p^{(j)} < p^- = \Phi^-(p^-) \le \Phi^-(p^{(j)})$. Thus, the strict inequality holds, establishing that the pure strategy $p^{(j)}$ is strictly dominated by $p^{(j+1)}$.
\end{proof}

Having established the bounds on the surviving strategy set for the Bertrand Logit grid, we now combine this structural result with our behavioral convergence guarantee (\cref{thm:da_IESPDS}). This yields our final corollary: when Domination-Avoiding agents interact in this discrete market, they asymptotically concentrate their pricing entirely within the competitive window.

\begin{customcorollary}{3}
    Consider two Domination-Avoiding agents that repeatedly play a Bertrand duopoly game with Logit demand over any finite grid of possible prices $p^{(1)}<p^{(2)}< \dots <p^{(M)}$, that is $\epsilon$-dense, $p^{(j)}-p^{(j-1)} \le \epsilon$, and whose range contains the equilibrium,  $p^{(1)} \le p^* \le p^{(M)}$ where $p^*$ is the competitive Duopoly equilibrium price (of the continuous Bertrand game). Then
    \[
    \lim_{T \rightarrow \infty} \Pr_{t \sim {U}(1 \dots T)}[ p^*-2\epsilon \le p_i^t \le p^*+2\epsilon] = 1
    \] 
    where $p_i^t$ is the price played by player $i$ at time $t$ and where the probability is taken over a uniformly random time $t$ in the first $T$ steps as well as over the internal randomization of the agents.
\end{customcorollary}

\begin{proof}
Let $\mathcal{P}$ denote the finite price grid. By \cref{thm:IESPDS_collapse}, applying Iterated Elimination of Strictly Purely Dominated Strategies (IESPDS) to this game eliminates all prices strictly outside the neighborhood of the continuous equilibrium. Thus, the set of surviving prices satisfies $S^\infty \subseteq \{p \in \mathcal{P} : |p - p^*| \le 2\epsilon\}$.

Let $E = \mathcal{P} \setminus S^\infty$ denote the set of eliminated prices. By \cref{thm:da_IESPDS}, because the agents are Domination-Avoiding, the probability of playing any specific eliminated price $p \in E$ asymptotically converges to zero:
\[
\forall p \in E, \quad \lim_{T \rightarrow \infty} \Pr_{t \sim {U}(1 \dots T)} [p_i^t = p] = 0
\]
Because the discrete grid $\mathcal{P}$ is finite, the eliminated set $E$ is also strictly finite. Because the agents play exactly one price at any given time step, these events are mutually exclusive. Therefore, the asymptotic probability of playing \emph{any} eliminated price evaluates exactly to zero
\[
\lim_{T \rightarrow \infty} \Pr_{t \sim {U}(1 \dots T)} [p_i^t \in E] = \sum_{p \in E} \lim_{T \rightarrow \infty} \Pr_{t \sim {U}(1 \dots T)} [p_i^t = p] = 0
\]
By the law of total probability, the agent must play a price from the full grid $\mathcal{P}$. Therefore, the probability mass must asymptotically concentrate entirely on the surviving set $S^\infty$:
\[
\lim_{T \rightarrow \infty} \Pr_{t \sim {U}(1 \dots T)} [p_i^t \in S^\infty] = 1 - \lim_{T \rightarrow \infty} \Pr_{t \sim {U}(1 \dots T)} [p_i^t \in E] = 1
\]
Because every surviving price satisfies $|p - p^*| \le 2\epsilon$, substitution yields:
\[
\lim_{T \rightarrow \infty} \Pr_{t \sim {U}(1 \dots T)}[ p^*-2\epsilon \le p_i^t \le p^*+2\epsilon] = 1
\]
completing the proof.
\end{proof}


\end{document}